\let\oldReturn\Return
\renewcommand{\Return}{\State\oldReturn}
\newcommand{\cmark}{\ding{52}}
\newcommand{\xmark}{\ding{56}}
\newcommand{\aeval}{\textsc{AE-VAL}\xspace}
\newcommand{\jkind}{\textsc{JKind}\xspace}
\newcommand{\jsyn}{\textsc{JSyn}\xspace}
\newcommand{\jsynvg}{\textsc{JSyn-vg}\xspace}
\newcommand{\viable}{{\mathsf {Viable}}}
\newcommand{\tuple}[1]{\langle #1 \rangle}
\newcommand{\vx}{\vec{x}}
\newcommand{\vy}{\vec{y}}
\newcommand{\realizable}{\textsc{realizable}\xspace}
\newcommand{\unrealizable}{\textsc{unrealizable}\xspace}
\newcommand{\skolems}{\textit{Skolem}}
\newcommand{\true}{\top}
\newcommand{\false}{\bot}
\newcommand{\Land}{\bigwedge}
\newcommand{\Lor}{\bigvee}
\newcommand{\valid}{\mathit{valid}}
\newcommand{\invalid}{\mathit{invalid}}
\newcommand{\subs}{\textit{validRegion}}
\newcommand{\geqp}{%
  \mathrel{\raisebox{-0.5ex}{$\scriptscriptstyle($}}%
  \geq
  \mathrel{\raisebox{-0.5ex}{$\scriptscriptstyle)$}}%
}
\newcommand{\leqp}{%
  \mathrel{\raisebox{-0.5ex}{$\scriptscriptstyle($}}%
  \leq
  \mathrel{\raisebox{-0.5ex}{$\scriptscriptstyle)$}}%
}
\newcommand{\such}{\,.\,}
\newcommand\eqdef{\mathrel{\stackrel{\makebox[0pt]{\mbox{\normalfont\tiny def}}}{=}}}
\newcounter{template}
\begin{document}
\title{Synthesis of Infinite-State Systems with Random Behavior}

\author{Andreas Katis}
\orcid{0000-0001-7013-1100}
\email{katis001@umn.edu}
\affiliation{University of Minnesota, USA}
\author{Grigory Fedyukovich}
\orcid{0000-0003-1727-4043}
\email{grigory@cs.fsu.edu}
\affiliation{Florida State University, USA}
\author{Jeffrey Chen}
\email{chen4233@umn.edu}
\affiliation{University of Minnesota, USA}
\author{David Greve}
\email{david.greve@collins.com}
\affiliation{Collins Aerospace, USA}
\author{Sanjai Rayadurgam}
\email{rsanjai@umn.edu}
\affiliation{University of Minnesota, USA}
\author{Michael W. Whalen}
\email{whalen@cs.umn.edu}
\affiliation{University of Minnesota, USA}

\begin{abstract}
Diversity in the exhibited behavior of a given system is a desirable characteristic in a variety of application contexts. 
Synthesis of conformant implementations often proceeds by discovering witnessing Skolem functions, which are traditionally deterministic. In this paper, we present a novel Skolem extraction algorithm to enable synthesis of witnesses with random behavior and demonstrate its applicability in the context of reactive systems. The synthesized solutions are guaranteed by design to meet the given specification, while exhibiting a high degree of diversity in their responses to external stimuli. Case studies demonstrate how our proposed framework unveils a novel application of synthesis in model-based fuzz testing to generate fuzzers of competitive performance to general-purpose alternatives, as well as the practical utility of synthesized controllers in robot motion planning problems.
\end{abstract}

\maketitle

\section{Introduction}
\label{sec:intro}

\emph{Program synthesis} aims at automated generation of implementations that meet formal specifications. 
It has been thoroughly explored in various contexts, such as controller synthesis and automated program repair~\cite{alur2013syntax,neider2019learning,aeval-pbe,abate2017automated, nguyen2017connecting, chakraborty2018functional, gulwani2010dimensions}. 
The implementations are generated from of the specification's realizability and have the form of \emph{deterministic witnesses}.
Thus, by design they always compute (1) an output that meets the specification, and (2)
the same output for each particular input.
Determinism, however, prevents us from synthesizing systems that take advantage of \emph{randomness} to diversify their behavior%
\footnote{For the sake of brevity, throughout the paper, we refer to such systems using the adjective \emph{random} (e.g. \emph{random} system/design/witness/controller).}.
%
%
%
%
Advantages offered by these systems can be better understood when put into context of robot motion planning and fuzz testing.

\textbf{Fuzz testing}. Synthesis of random designs allows one to specify and create system-specific \textit{fuzzers}~\cite{manes2019art}. The idea is to follow a mindset similar to how \emph{model-based testing} techniques utilize the system-under-test (SUT) specification to generate test cases~\cite{utting2012taxonomy}. We propose to use the fragment of the model related to the SUT inputs to synthesize a fuzzer that repeatedly generates random (and sometimes malformed) tests. This fragment can be alternatively viewed as the fuzzer's specification, which can be further enriched with properties that dictate its behavior when certain testing objectives are met. For example, when a vulnerability is detected, we can limit the fuzzer's next generated test cases within a desired range around the test that exposed the issue. System coverage is also one such objective, where we can dictate how the fuzzer diversifies the generated tests through its specification, improving the chances of reaching previously unexplored system states. From a qualitative standpoint, synthesis in model-based fuzz testing can be considered as a viable high-level solution that does not require the user to create extensive corpora of tests. Furthermore, the synthesized fuzzers can be a strong, SUT-specific alternative to general-purpose model-based fuzzers.~\cite{holler2012fuzzing, aschermann2019nautilus, veggalam2016ifuzzer}.  

\textbf{Robot motion planning}. 
In coverage path planning problems, the goal is to maximize the area that a robot can cover while avoiding obstacles~\cite{galceran2013survey}. Furthermore, randomness can serve as an additional security barrier in avoidance games that involve adversaries with learning capabilities. A random strategy is inherently harder to infer and exploit. In the special case of infinite-state problems, it is an even bigger challenge, as the current state-of-the-art in automata learning is limited to finite-state problems~\cite{howar2018active}.


We treat systems in the aforementioned applications as so-called \emph{reactive systems}, which have to exhibit specification-compliant behavior against an unpredictable environment. Examples are commonly found in aviation, autonomous vehicles, and medical devices. 
Synthesis of random reactive designs is offered by
the recent
Reactive Control Improvisation (RCI)~\cite{fremont2017control, fremont2018reactivecontrolimprov} framework, but limited only to finite-state systems (i.e., over the boolean domain), relies on probabilistic analysis to determine the realizability of the specification, and its synthesized witnesses require further refinement to be applicable in real world scenarios.

We present a novel approach to synthesis of random infinite-state systems, whose corresponding specifications may involve constraints over the Linear Integer or Real Arithmetic theories (LIRA)~\cite{barrett2010smt} and thus not limited to finite-state systems. The intuition behind this effort is to allow reasoning, and consequently synthesis, over ranges of safe reactions instead of computing witnesses with deterministic responses. 

The pursuit of generality poses new challenges, for which we propose a novel Skolemization procedure to simulate randomness. 
We build on top of state-of-the-art reactive synthesis approach for deterministic systems called \jsynvg~\cite{katis2018validity}.
It iteratively generates a greatest fixpoint over system states that ensures the realizability of the given specification but offers only a brute and inflexible  strategy for witness extraction (via predetermined Skolemization rules)~\cite{simabs,aeval,aeval-pbe}.
Our key novelty is in a new algorithm that enables
replacing deterministic assignments in the Skolem functions with applications of \emph{uninterpreted random number generators}. Uninterpreted functions allow us to reason about solutions with random, broad, and most importantly, compliant behavior.

The new Skolem extraction algorithm preserves \jsynvg's important properties. 
Thus, the procedure remains completely automated, unlike previous work on infinite-state synthesis that requires additional templates, or the user's intervention~\cite{alur2013syntax, ryzhyk2009automatic, DBLP:conf/popl/BeyeneCPR14,finkbeiner2016synthesis}. More importantly, our work imposes no performance overheads over \jsynvg, remaining thus competitive with other state-of-the-art tools which could be considered for random synthesis~\cite{neider2019learning}. 
We implemented the Skolem extraction algorithm and applied it in two distinct case studies.

\smallskip
\noindent\textbf{Model-based fuzz testing}. We are the first to explore the applicability of reactive synthesis in fuzz testing. On a chosen set of applications designed for the DARPA Cyber Grand Challenge~\cite{DARPACGC, lee2015darpa}, the synthesized fuzzers performed competitively against well-established tools (\textsc{AFL}~\cite{zalewskiAFL}, \textsc{AFLFast}~\cite{bohme2017coverage}), both in terms of code coverage as well as exposing vulnerabilities.

\smallskip
\noindent\textbf{Robot motion planning}. We synthesized safe robot controllers that participate in avoidance games on both bounded and infinite arenas. Using simulation, we show how the synthesized controller leads to the robot being capable of avoiding its adversary while moving in random patterns. We demonstrate how the synthesized strategies are safe by design, no matter what bias is introduced at the implementation level. Furthermore, we showcase why randomness in the controller behavior is a mandatory feature, if synthesis is to be considered for coverage path planning problems.

\smallskip
To summarize, the contributions of this work are:

\begin{itemize}[leftmargin=*]
  \item the first complete formal framework that enables specification and synthesis of random infinite-state reactive systems;
  \item a novel Skolemization procedure that enables random synthesis with no performance overhead, by taking advantage of uninterpreted functions to reason about ranges of valid reactions;
  \item a novel application of synthesis in model-based fuzz testing, where we generated reactive fuzzers, yielding competitive results in terms of system coverage and vulnerability detection; and
  \item the application of synthesized random controllers in safety problems for robot motion planning, outlining important advantages over deterministic solutions.
\end{itemize} 

The rest of the paper is structured as follows.
Sect.~\ref{sec:background} provides the necessary formal background on which our work depends. 
Sect.~\ref{sec:ex} illustrates and Sect.~\ref{sec:aeval} describes in detail the algorithm for synthesis of random Skolem functions. 
The implementation is outlined in Sect.~\ref{sec:results} and
 the case studies are presented in Sect~\ref{sec:fuz} and Sect.~\ref{sec:robot}.
Finally, we discuss related work in Sect.~\ref{sec:related} and conclude in Sect.~\ref{sec:conclusion}.

\section{Background and Notation}
\label{sec:background}


A first-order formula $\varphi$ is satisfiable if there exists an assignment $m$, called a model, 
under which $\varphi$ evaluates to $\true$ (denoted $m\models\varphi$). 
If every model of $\varphi$ is also a model of $\psi$, then we write $\varphi \Rightarrow \psi$.
A formula $\varphi$ is called \emph{valid} if $\true \Rightarrow \varphi$.
For existentially-quantified formulas of the form $\exists y \such \psi(x, y)$, validity requires that each assignment of variables in $x$ 
can be \emph{extended} to a model of $\psi(x, y)$.
For a valid formula $\exists y \such \psi(x, y)$, a term $\mathit{sk}_{y}(x)$ is called a~\emph{Skolem}, if $\psi (x, \mathit{sk}_{y} (x))$ is valid.
More generally, for a valid formula $\exists \vy \such \psi(x, \vy)$ over a vector of existentially quantified variables $\vy$, there exists a vector of individual Skolem terms, one for each variable $\vy[j]$, where $0 < j \le N$ and $N = |\vy|$, such that:
%
$\true \Rightarrow \psi (x, \mathit{sk}_{\vy[1]} (x),\ldots, \mathit{sk}_{\vy[N]} (x))$. 

\subsection{Synthesis with \jsynvg}
We build on top of 
\jsynvg, a reactive synthesis procedure that takes formal specifications in the form of \emph{Assume-Guarantee contracts}.
%
%
Systems are described in terms of inputs $\vx$ and outputs $\vy$, using the predicate $I(\vy)$ to denote the set of initial outputs and $T(\vy, \vx, \vy')$ for the system's transition relation, where the next (primed) outputs $\vy'$ depend on the current input and state. 
\emph{Assumptions} $A(\vx, \vy)$ correspond to assertions over the system's current state, while the set of \emph{guarantees} is decomposed into constraints over the initial outputs $G_{I}(\vy)$, and guarantees $G_{T}(\vy, \vx, \vy')$ that have to hold over any valid transition (i.e., with respect to $T(\vy, \vx, \vy')$).

\begin{algorithm2e}[!t]
\small
\SetAlgoSkip{}
\KwIn{$A(\vx,\vy)$: {assumptions}, $G_{I}(\vy), G_{T}(\vy,\vx,\vy')$: {guarantees}}
\KwOut{$\tuple{\realizable, \skolems} / \unrealizable$}
\BlankLine
$F(\vy) \gets \true$\label{alg:init}\;
\While{$\true$}{
  $\phi \gets \forall \vx,\vy. \ (F(\vy) \land A(\vx, \vy) \Rightarrow \exists \vy'.G_{T}(\vy, \vx, \vy') \land F(\vy'))$\label{alg:ae1}\;
  $\tuple{\mathit{valid}, \subs(\vx, \vy), \skolems} \gets \aeval(\phi) \label{alg:val1}$\;
  \uIf{$\mathit{valid}$\label{alg:val2}}{
            \lIf{$\exists \vy . G_{I}(\vy)\! \land\! F(\vy)$}{%
				\textbf{return} $\tuple{\realizable, \skolems}$\label{alg:issat}}
			\lElse{%
		 		\textbf{return} $\unrealizable$\label{alg:unreal}}
  }
  \lElse{%
	$F(\vy) \gets F(\vy) \land \lnot \textsc{ExtractUnsafe}(\subs(\vx, \vy))$\label{alg:rem}}
}
\caption{\jsynvg $\Big(A(\vx, \vy), G_{I}(\vy), G_{T}(\vy, \vx, \vy')\Big)$, cf.~\cite{katis2018validity}.}
\label{alg:synthesis}
\end{algorithm2e}

The algorithm behind \jsynvg performs a realizability analysis to determine the existence of a greatest fixpoint of states meeting the contract, that can lead to an implementation. Furthermore, the computed fixpoint can be directly used for the purposes of synthesis, as it precisely captures a collection of system output constraints which, when instantiated, define safe reactions. Formally, the computed fixpoint is a set of \emph{viable} outputs, guaranteed to preserve safety by requiring that a valid transition to another viable output is always available.
\begin{align}
\begin{split}
  \viable(\vy) &\eqdef
  \forall \vx, \vy. (A(\vx, \vy) \Rightarrow \exists \vy'.~ G_T(\vy, \vx, \vy')
\land \viable(\vy'))
\label{eq:viable}
\end{split}
\end{align}

The coinductive definition of viable states is sufficient to prove the realizability of a contract, as long as the corresponding decision procedure can find a viable output that satisfies the initial guarantees $G_I(\vy)$:
\begin{equation*}
\exists \vy. G_I(\vy) \land \viable(\vy)
\label{eq:nonempty}
\end{equation*}

Given a proof of the contract's realizability, the problem of synthesis is formally defined as the process of computing an initial output \ $\vy_{\emph{init}}$ and a function $f(\vx, \vy)$ such that $G_{I}(\vy_{\emph{init}})$ and $\forall \vx, \vy. \viable(\vy) \Rightarrow \viable(f(\vx, \vy))$ hold true. 

Alg.~\ref{alg:synthesis} summarizes \jsynvg. It begins with the generic candidate fixpoint $F(\vy) = \true$ and solves
 the $\forall\exists$-formula $\phi$ for the validity (line~\ref{alg:val1}) that corresponds to the definition of viable outputs in Eq.~\ref{eq:viable}. If $\phi$ is valid and an output vector in $F(\vy)$ exists that satisfies the initial guarantees, then the contract is declared realizable, and a witnessing Skolem term is extracted.
If $\phi$ is invalid, the algorithm extracts the largest subset of $F(\vy) \land A(\vx, \vy)$, denoted $\subs(\vx, \vy)$, such that the following formula is valid:
$$
\forall \vx,\vy. \ (\subs(\vx, \vy) \Rightarrow \exists \vy'.G_{T}(\vy, \vx, \vy') \land F(\vy'))
$$
Due to the possibility of $\subs(\vx, \vy)$ strengthening the assumptions $A(\vx, \vy)$, we additionally extract a set of constraints over unsafe states (\textsc{ExtractUnsafe}) from $\subs(\vx, \vy)$. The negation of this set is then added as a new conjunct to the candidate $F(\vy)$ and the algorithm iterates until either $\phi$ is valid or $F(\vy) = \false$. For further details, we refer the reader to the original paper on \jsynvg~\cite{katis2018validity}.

\subsection{Realizability and Synthesis with \aeval}
\label{sec:decomp}

The greatest fixpoint algorithm described by \jsynvg uses \aeval, an algorithm to determine the validity of $\forall\exists$-formulas and to generate (deterministic) witnesses in the form of Skolem terms. The latter feature is also the point of interest behind this work, as randomness can be introduced through replacing the part of \aeval's Skolemization strategy with our new proposed algorithm.

Alg.~\ref{alg:ae_val} gives a brief pseudocode of \aeval%
\footnote{Note that for simplicity of presentation, in the pseudocode we assume a single existentially quantified variable $y$ (however, the algorithm  and the implementation can handle any vector $\vy$).}.
The idea is to enumerate all models of $\vx$ and to extend each of them to a model of $y$.
Because a naive enumeration would be endless, \aeval generates a sequence of Model-Based Projections (MBPs)~\cite{DBLP:conf/lpar/BjornerJ15} each of which groups models of $\vx$.
Formally, an MBP for model $m$ is a formula $P(\vx)$, such that $m\models P(\vx)$, and $P\Rightarrow\exists y. \psi(\vx, y)$.
To create it, \aeval gathers all literals of $\psi$ which are evaluated to true by $m$ (line~\ref{alg:lit}).
These literals are further referred to as \emph{Skolem constraints} $\pi$.
In linear arithmetic, each Skolem constraint is composed only of arithmetic relations, linear combinations over $\vx$ and $y$, and numeric constants.
Finally, to obtain an MBP $\mathit{pre}$, \aeval just eliminates $y$ from the conjunction of Skolem constraints (line~\ref{alg:proj}).

The \textsc{ExtractSk} procedure, used for Skolem extraction, implements an inflexible strategy to transform Skolem constraints to local Skolem terms (we refer the reader to the original paper on \aeval for further details~\cite{aeval}).
The final Skolem term has a form of a \emph{decision tree},  where preconditions are placed on the nodes and local Skolem terms (i.e., outputs of \textsc{ExtractSk}) are on the leaves, i.e., the nested if-then-else structure ($\mathit{ite}(\cdot)$):
\begin{align}\notag
sk_{y}(\vec x) \eqdef \mathit{ite}(\mathit{pre}[1], sk_{1, y}(\vec x), \mathit{ite}(\mathit{pre}[2], sk_{2, y}(\vec x), \ldots , \cr
 (\mathit{ite} (\mathit{pre}[M - 1], sk_{M-1, y}(\vec x), sk_{M, y}(\vec x)))))
\end{align}

Finally, in the case that the input formula $\forall \vx \exists y \such \psi(\vx,y)$ is invalid, \aeval returns $\Lor\limits_{i = 1}^{M - 1} {\mathit{pre}[i](\vx)}$ as the formula's maximal \emph{region of validity}, i.e., the maximal set of models of the universally quantified variables for which the formula becomes valid. This region is used by \jsynvg in order to further refine the candidate fixpoint during each of its iterations (Alg.~\ref{alg:synthesis}, line~\ref{alg:rem}).

\begin{algorithm2e}[t!]
\small
\SetAlgoSkip{}
\SetKwFor{While}{while}{do}{}
\SetKw{KwContinue}{continue}
\KwIn{$\forall \vx \exists y \such \psi(\vx,y)$}
\KwData{MBPs $\mathit{pre}$, Skolem constraints $\pi$, Skolem terms $\mathit{sk}$}
\KwOut{Return value $\in \{\valid, \invalid\}$ of ${\forall \vx \exists y \such \psi(\vx,y)}$, $\subs$, $\skolems $}
\BlankLine
$M \gets 1$\;
\While{$\true$}{
\uIf(\label{alg:returnUnsat}){$\Land\limits_{i = 1}^{M - 1} \neg{\mathit{pre}[i] (\vx)} \Rightarrow \false$} 
    {\textbf{return} $\tuple{\valid, \true, \textsc{DecisionTree}(\mathit{pre}, \mathit{sk})}$;}%
\uIf(\label{alg:returnSat}){$\exists m \models \psi(\vx,y) \land \Land\limits_{i = 1}^{M - 1} \neg{\mathit{pre}[i](\vx)}$}
{
  $\pi[M](\vx,y) \gets \{\ell \mid \ell \in \textsc{literals}(\psi) \land m\models\ell\}$\label{alg:lit}\;
  $\mathit{pre}[M](\vx) \gets \textsc{QElim}(y, \pi[M](\vx,y))$\label{alg:proj}\;
  $\mathit{sk}[M](\vx, y) \gets \textsc{ExtractSk}(\vx,y, \Land\limits_{c \in \pi[M](\vx, y)} c)$\;
  $M \gets M + 1$\;
}
\lElse    {\textbf{return} $\tuple{\invalid, \Lor\limits_{i = 1}^{M - 1} {\mathit{pre}[i](\vx)}, \varnothing}$}
}
\caption{\aeval \Big($\forall \vx \exists y \such \psi (\vx,y)$\Big), cf.~\cite{simabs,aeval}.}
\label{alg:ae_val}
\end{algorithm2e}

\section{Random Synthesis - Motivating Example}
\label{sec:ex}

In this section, we demonstrate a complete run of the synthesis procedure and show how the standard synthesized witnesses are unable to exhibit random behavior. As an example, we use a safety robot motion planning problem from Neider et al.~\cite{neider2019learning}. In this problem, a robot is placed on a one-dimensional grid with two players, the environment and the system, controlling its movement. Each player can choose to either move the robot left or right, or not move it at all (we refer to these choices using the values ${-1,0,1}$). The robot starts at $\mathit{position} = 0$, and the safety property for the system is to retain the robot in the area of the grid for which $\mathit{position} \geq 0$.

Fig.~\ref{fig:contract} shows an Assume-Guarantee contract for the example, described in the Lustre language. The contract has the singleton input $\vec x = \{x\}$ (internally identified by the \texttt{-\,-\%REALIZABLE} statement) and the outputs $\vec y = \{y, \mathit{position}\}$\footnote{Variable $\mathit{property}$ is local to the contract. Formally, local variables are treated as system outputs.}. The contract assumption is that the environment will only make legal choices, i.e., $A(\vx, \vy) \eqdef -1 \leq x \land x \leq 1$. The initial guarantee refers to the initial position of the robot and the system choices for movement, i.e., $G_{I}(\vy) \eqdef (\mathit{position} = 0) \land (-1 \leq y \land y \leq 1)$. On the other hand, the transitional guarantee captures the safety property along with the stateful computation step for the new position, i.e., $G_{T}(\vy, \vx, \vy') \eqdef (\mathit{position}' = \mathit{position} + x + y') \land (\mathit{position}' \geq 0) \land (-1 \leq y' \land y' \leq 1)$ (the transition relation for $\mathit{position}$ is defined using Lustre's  \texttt{->} and \texttt{pre} operators).\footnote{Intuitively, the problem is formalized in a manner where the position of the robot is updated after both players make a choice, with the system reacting to the choice of the environment.} The safety properties are captured by \texttt{ok1} and \texttt{ok2} (declared as such using \texttt{-\,-\%PROPERTY}).

\begin{figure}[!t]
\centering
\begin{lstlisting}[basicstyle=\small]
node onedim(x,y : int) returns();
var
  ok1, ok2 : bool;
  position : int;
let
  assert x >= -1 and x <= 1;
  position =  0 -> (pre(position) + x + y);
  ok1 = y >= -1 and y <= 1;
  ok2 = position >= 0;

  --%PROPERTY ok1;
  --%PROPERTY ok2;
  --%REALIZABLE x;
tel;
\end{lstlisting}
\caption{Assume-Guarantee contract in Lustre.}
\label{fig:contract}

\begin{lstlisting}[basicstyle=\small]
void skolem() {
  if(position + x == 1) {
    y = -1;
  } else if (position + x >= -1 &&
  	position + x <= 0) {
    y = - (position + x);
  } else {
    y = 0;
  }
}
\end{lstlisting}
\caption{Synthesized deterministic witness in C.}
\label{fig:detskolem}

\end{figure}

The procedure begins with a call to \jsynvg using the contract as its input. The contract is realizable and the greatest fixpoint of viable states is $F(\vy) \eqdef \mathit{position} \geq 0$. \aeval declares that the formula $\phi \eqdef \forall \vec y, \vec x.(A(\vx,\vy) \land F(\vy) \Rightarrow \exists \vy'. G_{T}(\vy,\vx,\vy') \land F(\vy')$ is valid and extracts a Skolem term as a witness. Fig.~\ref{fig:detskolem} presents a direct translation of the function to C. The synthesized implementation behaves in a deterministic way under the following conditions: 

\begin{enumerate}
\item whenever $\mathit{position} + x = 1$, the system chooses to move left ($y = -1$);
\item if $\mathit{position} + x$ equals 0 or -1, then the system chooses to do nothing or move right, respectively ($y = - (\mathit{position} + x)$);
\item for any other case, the system chooses to do nothing ($y = 0$).
\end{enumerate}

While the implementation preserves safety, the set of possible actions are limited due to the deterministic assignments to the output $y$. Interestingly, for this particular implementation the system forces the robot to go back to positions that are dangerously close to the unsafe region! Similarly, the corresponding solution by Neider et al. is the winning set $[0,3)$, which would translate to implementations where the system would never move the robot beyond $position = 2$. Nevertheless, implementations exist for which the system can exercise a broader set of behaviors. For this example in particular, when either condition (1) or (3) is true, the system can freely choose any possible move action without violating the safety properties. Fig.~\ref{fig:skolem} shows an implementation that can (theoretically) exercise any such possible assignment (we explain why in Sect.~\ref{sec:alg}). In the following sections, we present a new method to synthesize a random witness that can, in theory, provide all such possible permutations using a single implementation.

\begin{algorithm2e}[b!]
\small
\SetAlgoSkip{}
\SetInd{0.4em}{0.4em}
\KwIn{Variables $\vx, y$, Skolem constraints $\pi(\vx,y) =  \Land\limits_{r \in E \cup D \cup G \cup GE \cup L \cup LE}r(\vx, y) $}
\KwOut{Term $\mathit{sk}$, such that $(y = \mathit{sk}(\vx)) \Rightarrow \pi(\vx,y)$}
{$\ell_{\mathit{closed}} \gets \false, u_{\mathit{closed}} \gets \false$;\label{alg3:init}} \\
\lIf(\label{alg3:trivcase}){$E \neq \varnothing$}{\textbf{return} $\textsc{ASN}(e)$, s.t. $e\in{E}$}
\uIf(\label{alg3:boundchecks}){$G \cup GE \neq \varnothing$}{
  $\ell \gets \textsc{MAX}(G \cup GE)$; \\
  $\ell_{\mathit{closed}} \gets G = \varnothing \lor  \textsc{MAX}(G) < \textsc{MAX}(GE)$;}
\uIf(){$L \cup LE \neq \varnothing$}{
  $u \gets \textsc{MIN}(L \cup LE)$; \\
  $u_{\mathit{closed}} \gets L = \varnothing \lor  \textsc{MIN}(L) > \textsc{MIN}(LE)$;\label{alg3:boundchkend}}
\lIf(\label{alg3:lowequp}){$\ell(\vx) = u(\vx)$}{\textbf{return} $\ell$}
  $H \gets \{\textsc{ASN}(d) \mid d \in D\}$\label{alg3:extractRHS}\; 
  \lIf(\label{alg3:diseqbothundef}){$\ell = \emph{undef} \land u = \emph{undef}$}{\textbf{return} $f_{\textsc{rng}}(H, \true, \true,-\infty,+\infty)$}
  \lIf(){$\ell = \emph{undef}$}{\textbf{return} $f_{\textsc{rng}}(H, \true, u_{\mathit{closed}},-\infty,u)$\label{alg3:defu}}
  \lIf(){$u = \emph{undef}$}{\textbf{return} $f_{\textsc{rng}}(H, \ell_{\mathit{closed}}, \true,\ell,+\infty)$\label{alg3:deflt}}  
\textbf{return} $f_{\textsc{rng}}(H, \ell_{\mathit{closed}}, u_{\mathit{closed}}, \ell, u)$\label{alg3:finalret}
\caption{$\textsc{ExtractSk}(\vx, y, \pi)$}
\label{alg:extractsk}
\end{algorithm2e}

\section{Synthesizing random designs} 
\label{sec:aeval}

The standard Skolem term extraction algorithm in \aeval does not support the generation of Skolem functions with random variable assignments. In this section, we present a new procedure to compute witnesses that can be used to simulate random behavior. 

\subsection{Overview}

Our proposed algorithm preserves the overall structure of \aeval as well as the soundness of its results~\cite{aeval}. The main idea is to replace the deterministic assignments that eventually appear in the leaves of the generated decision tree with applications of uninterpreted functions, which when translated at the implementation level, can be viewed as function calls to a user-defined random number generator. We refer to these functions as \textit{uninterpreted random number generators}:

\begin{definition}[Uninterpreted Random Number Generator \textsc{(URNG)}]
\textsc{URNG} is an uninterpreted function

\begin{equation*}
f_{\textsc{rng}}(H, \ell_{\mathit{closed}}, u_{\mathit{closed}}, l, u) :  T_{1} \times \ldots \times T_{|D|} \times \mathbb{B} \times \mathbb{B} \times T \times T \to T,
\end{equation*}

\noindent where $T : \{\mathbb{Z}, \mathbb{R} \}$, $H$ is a collection of right side expressions extracted from the set of disequalities $D$, $\ell$ and $u$ determine the bounded interval for the randomly generated value, and $\ell_{\mathit{closed}}, u_{\mathit{closed}}$ are boolean flags that, when set, identify the corresponding bound as being closed. Without loss of generality, we require the following postconditions to hold, on any supplied implementation of $f_{\textsc{rng}}$:

\begin{enumerate}
\item $\forall h \in H. f_{\textsc{rng}}(H, \_, \_, \_, \_) \neq h$
\item $f_{\textsc{rng}}(H, \false,\false,\ell,u) \in (\ell,u)$
\item $f_{\textsc{rng}}(H, \false,\true,\ell,u) \in (\ell,u]$
\item $f_{\textsc{rng}}(H, \true,\false,\ell,u) \in [\ell,u)$
\item $f_{\textsc{rng}}(H, \true,\true,\ell,u) \in [\ell,u]$
\end{enumerate}
\label{def:uninterpRNG}
\end{definition}

The use of URNGs allows us to reason about valid regions of values for variable assignments instead of a particular value. Furthermore, the postconditions defined for these functions play an integral role in determining the soundness of the resulting Skolem function. It is important to note that we do not have to reason regarding the emptiness of the intervals. The intuition behind this is that such computed constraints infer an unrealizable contract. In these scenarios \aeval would declare the input $\forall\exists$-formula as invalid, and the Skolem extraction algorithm would never be invoked.

\subsection{Algorithm}
\label{sec:alg}
Alg.~\ref{alg:extractsk} shows our proposed procedure for extracting Skolem functions that allow for random choices.
It is invoked from Alg.~\ref{alg:ae_val} and takes a set of universally quantified variables $\vx$, an existentially quantified variable $y$, and Skolem constraints $\pi$ computed in Alg.~\ref{alg:ae_val}.
Alg.~\ref{alg:extractsk} constructs a graph of a function that is embedded in a relation, specified by a conjunction of expressions over the relational operators $\{=, \neq, >, \ge, \le, <\}$, using the following constraints:

\begin{align} \notag
E \eqdef \{y = f_i(x)\}_i  \quad
D \eqdef \{y \neq f_i(x)\}_i \quad
G \eqdef \{y > f_i(x)\}_i \cr
GE \eqdef \{y \ge f_i(x)\}_i  \quad
LE \eqdef \{y \le f_i(x)\}_i \quad
L \eqdef \{y < f_i(x)\}_i 
\end{align}

In addition to the constraints above, Alg.~\ref{alg:extractsk} also utilizes the following helper functions (where $\sim\,\in\{<,\le,=,\neq,\ge,>\}$):

\begin{gather*}
\textsc{ASN} (y \sim e(x)) \eqdef e \quad 
\textsc{MIN}(\{s\}) \eqdef \textsc{ASN} (s) \quad 
\textsc{MAX}(\{s\}) \eqdef \textsc{ASN} (s) \\
\textsc{MIN}(S) \eqdef \textit{ite}(\textsc{ASN} (s) \!\leqp\! \textsc{MIN}(S\!\setminus\!\{s\}), \textsc{ASN} (s) , \textsc{MIN}(S\!\setminus\!\{s\})), s \in S \\
\textsc{MAX}(S) \eqdef \textit{ite}(\textsc{ASN} (s)\!\geqp\! \textsc{MAX}(S\!\setminus\!\{s\}), \textsc{ASN} (s) , \textsc{MAX}(S\!\setminus\!\{s\})), s \in S \\
\end{gather*}

Operator \textsc{MIN} (\textsc{MAX}) computes a symbolic minimum (maximum) of the given set of constraints. While the algorithm is applicable for both LIA and LRA, the following transformations are used  for the case of integers:%

\begin{align}\notag
& \infer{A \le B-1}{A < B} \qquad
 \infer{A > B-1}{A \ge B} 
\end{align}

These transformations help avoid clauses containing $<$ and $\ge$.
Line~\ref{alg3:init} initializes the value of the boolean flags $\ell_{\mathit{closed}}$ and $u_{\mathit{closed}}$ to false, and line~\ref{alg3:trivcase} handles the case where equality constraints exist over $y$. Lines~\ref{alg3:boundchecks} to~\ref{alg3:boundchkend} construct the expressions for the lower and upper bounds, and the truth of the flags depends on the (symbolic) comparison between the symbolic minima and maxima. Line~\ref{alg3:lowequp} handles the case where the lower bound is equal to the upper bound. It should be noted that for cases handled by lines~\ref{alg3:trivcase} and~\ref{alg3:lowequp} only deterministic choices exists.

Lines~\ref{alg3:extractRHS} to~\ref{alg3:finalret} attempt to compute an expression containing a URNG that considers the set of disequalities $D$. First, the algorithm extracts the right-hand side of disequalities in line~\ref{alg3:extractRHS}. If both bounds are undefined, line~\ref{alg3:diseqbothundef} returns the application of the URNG $f_{\textsc{rng}}(H, \true, \true, -\infty, +\infty)$, where $-\infty$ and $+\infty$ are represented as free variables that can be later mapped respectively to the  minimum and maximum arithmetic representations supported by the implementation (e.g. \texttt{INT\_MIN} and \texttt{INT\_MAX} for integers in C). If only the lower bound is undefined (line~\ref{alg3:defu}), we use  $f_{\textsc{rng}}(H, \true, u_{\mathit{closed}}, -\infty, u)$ to generate a random value with an unconstrained lower bound. Similarly, we handle the case where no constraints exist for the upper bound in line~\ref{alg3:deflt}. In line~\ref{alg3:finalret}, both $\ell$ and $u$ are  defined and the algorithm returns  $f_{\textsc{rng}}(H, \ell_{\mathit{closed}}, u_{\mathit{closed}}, \ell, u)$ to capture a random value within the respective bounds. In all above cases, when $H \neq \varnothing$, the URNG is expected to generate a value that satisfies all disequality constraints in $D$. For the special case where $D = \varnothing$, there are no such disequalities over $y$ and the Skolem term can freely assign any value within the computed bounds $\ell$ and $u$.

As an illustration of our procedure, we present summarized runs over 
the following examples. 

\begin{example}
Consider the formula $\forall x. \exists y_1, y_2. \psi(x, y_1, y_2)$ over LIA, where:%

\begin{multline*}
\psi(x, y_1, y_2) \eqdef \\ (x \leq 2 \land y_1 > -3x \land y_2 < x) \lor(x \geq -1 \land y_1 < 5x \land y_2 > x)
\end{multline*}

The formula is valid since there exists an assignment to  $y_1$ and $y_2$ that satisfies the constraints in $\psi$, for any  $x$. In order to construct such a witness, \aeval needs to consider two separate cases for $x$, i.e., the constraints $x \leq 2$ and $x \geq -1$.

Under $x \leq 2$, the deterministic Skolem  terms would be $-3x + 1$  for $y_1$ and $x -1$ for $y_2$. 
For the random case, Alg.~\ref{alg:extractsk} computes $f_{\textsc{rng},y_1}(\varnothing, \false, \true, -3x, +\infty)$ and $f_{\textsc{rng},y_2}(\varnothing, \true, \false, -\infty, x)$. Under $x \geq -1$, the deterministic terms would be $-3x +1$ for $y_1$ and $x + 1$  for $y_2$, while Alg.~\ref{alg:extractsk} computes the functions $f_{\textsc{rng},y_1}(\varnothing, \true, \false, -\infty, 5x)$ and $f_{\textsc{rng}, y_2}(\varnothing, \false, \true, x, +\infty)$, respectively.

Note that for the random case, the above terms are finally combined into the Skolem term for $\exists y_1, y_2. \psi(x, y_1, y_2)$:

\begin{multline*}
\mathit{sk}_{\vec y}(x) \eqdef \textit{ite}(x \leq2, \\ (y_1 = f_{\textsc{rng},y_1}(\varnothing, \false, \true, -3x, +\infty) \land y_2 = f_{\textsc{rng},y_2}(\varnothing, \true, \false, -\infty, x)), \notag \\ (y_1 = f_{\textsc{rng},y_1}(\varnothing, \true, \false, -\infty, 5x) \land y_2 = f_{\textsc{rng}, y_2}(\varnothing, \false, \true, x, +\infty)))\notag
\end{multline*}
\label{ex:LIA}
\end{example}

\begin{example}
Consider an Assume-Guarantee contract for a system with the input vector $\vec x = \{x_1, x_2, x_3\} \in \mathbb{R}^3$ and one output $y \in \mathbb{R}$ and the following constraints

\begin{itemize}
\setlength\itemsep{.2em}
  \item $A(x_1, x_2) \eqdef x_1, x_2 \in (0, 1)$
  \item $G_I(y) \eqdef \true$
  \item $G_{T}(y, x_1, x_2, y') \eqdef y' \in (0,1) \land y' \neq x_1 \land y' \neq x_2$
\end{itemize}

The above specification is realizable as there are infinitely many assignments to $y$ that satisfy the guarantees $G$ given any value of $x_1, x_2$ in $(0, 1)$. Using Alg.~\ref{alg:extractsk} we retrieve the following Skolem term to enable random behavior (note that input $x_3$ is not included in the set $H$, i.e. the first argument of the function):

\begin{align}
\mathit{sk}_y(\vec x) \eqdef f_{\textsc{rng},y}(\{x_1,x_2\}, \false, \false, 0, 1) \notag.
\end{align}
\end{example}

\begin{example}

Consider the contract from Fig.~\ref{fig:contract}. The details of the synthesis procedure remain identical with the deterministic approach up until the Skolemization step. Fig.~\ref{fig:skolem} shows the C implementation for the random witness that is synthesized using Alg.~\ref{alg:extractsk}. Our proposed Skolemization procedure returns the assignment value for $y$ that is equivalent to $f_{\textsc{rng}}(\true, \true, -1, 1)$, for the conditions under which the system can safely choose to move the robot either left, right, or not at all. The actual choice is randomly made through the application of a function named \textit{RandVal}. The implementation of the function is then left to the engineer's discretion (an example is given in Fig.~\ref{fig:rng}). 

It is important to note that throughout this section, we presented Alg.~\ref{alg:extractsk} assuming that the input formula $\psi(\vec x, y)$ to \aeval is \emph{disjunction-free}.
The main difference between the original Skolemization procedure in \aeval and our new one is that the former does not provide all possible Skolem terms, while the latter does.
Thus, the original \aeval algorithm is not sensitive to the shape of the original formula, but our new algorithm requires a special treatment for the disjunctions.
In fact, our approach is generalizable to the case of arbitrary formulas via converting them the Disjunctive Normal Form (DNF).
The set of Skolem functions can then be generated for each of the disjunct separately.
The final solution is then composed from these \emph{partial} Skolem functions along with a parameter that randomly picks one of them.
\end{example}

\begin{figure}[!t]
\centering
\begin{lstlisting}[basicstyle=\small]
void skolem() {
  if(position + x == 1) {
    y = RandVal(1, 1, -1, 1);
  } else if (position + x >= -1 &&
  	position + x <= 0) {
    y = - (position + x);
  } else {
    y = RandVal(1, 1, -1, 1);
  }
}

\end{lstlisting}
\caption{Synthesized random witness.}
\label{fig:skolem}

\centering
\begin{lstlisting}[basicstyle=\small]
double RandVal(_Bool lflag, _Bool uflag,
  double lbound, double ubound){
  int min = lflag ? lbound : lbound+1;
  int max = uflag ? ubound : ubound-1;
  int range = max - min + 1;
  double rnd = (double) rand() /
  	(1.0 + (double) RAND_MAX);
  int value = (int) ((double) range * rnd);
  return value + min;
}
\end{lstlisting}
\caption{Example random number generator.}
\label{fig:rng}
\end{figure}

\subsection{Soundness and Completeness}

In this section we prove that Alg.~\ref{alg:extractsk} is sound and can provide all possible Skolem terms given a set of Skolem constraints $\pi(\vec x, y)$. As we noted in the beginning of Sect.~\ref{sec:alg}, Skolem constraints are created from literals of a formula in linear arithmetic, thus it cannot contain disjunctions.

\begin{theorem}[Soundness of Skolem Extraction]
Assuming that the properties 1-5 from Def.~\ref{def:uninterpRNG} hold, Alg.~\ref{alg:extractsk} returns valid Skolem terms. 
\label{thm:sound}
\end{theorem}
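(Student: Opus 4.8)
The plan is to read ``returns valid Skolem terms'' as the output contract of Alg.~\ref{alg:extractsk}: the returned term $\mathit{sk}(\vx)$ must satisfy $(y = \mathit{sk}(\vx)) \Rightarrow \pi(\vx, y)$, i.e. substituting $\mathit{sk}(\vx)$ for $y$ makes \emph{every} conjunct of $\pi$ true (on the region under consideration, where the guarding MBP $\mathit{pre}$ holds). I would first record two structural facts used throughout. First, $\pi$ is a conjunction of literals that were all evaluated to true by a single model $m$ in Alg.~\ref{alg:ae_val} (line~\ref{alg:lit}); hence $\pi(\vx, y)$ is satisfiable in $y$ on that region, so whenever a branch forces a unique value for $y$ that value is automatically a solution. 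Second, I would prove a small lemma about the bound expressions: $\ell = \textsc{MAX}(G \cup GE)$ is $\geq$ the right-hand side of every lower-bound literal and coincides with the tightest one, $u = \textsc{MIN}(L \cup LE)$ is $\leq$ the right-hand side of every upper-bound literal and coincides with the tightest one, and the flags computed on lines~\ref{alg3:boundchecks}--\ref{alg3:boundchkend} satisfy $\ell_{\mathit{closed}} = \true$ exactly when the tightest lower bound is non-strict (from $GE$) and dually for $u_{\mathit{closed}}$. With these in hand the proof is a case split following the control flow of the algorithm.

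The two deterministic branches are direct. On line~\ref{alg3:trivcase}, when $E \neq \varnothing$ the term $e(\vx)$ with $(y = e(\vx)) \in E$ is returned; since any model of $\pi$ must assign $y = e(\vx)$ and $\pi$ is satisfiable, $e(\vx)$ is the unique candidate and therefore satisfies every remaining literal, giving $\pi(\vx, e(\vx))$. On line~\ref{alg3:lowequp}, when $\ell(\vx) = u(\vx)$ the single value $\ell$ is returned; satisfiability of $\pi$ forces both bounds to be non-strict at this point (a strict bound would make the singleton interval empty) and forces every disequality to avoid $\ell$, so $\ell$ again satisfies all of $\pi$. Neither case appeals to the URNG postconditions.

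For the remaining branches the returned term is a URNG application, and I would discharge each conjunct of $\pi$ using Def.~\ref{def:uninterpRNG}. Disequalities $y \neq f_i(\vx)$ in $D$ are handled uniformly: their right-hand sides form $H$ (line~\ref{alg3:extractRHS}), and postcondition~1 guarantees the result differs from every $h \in H$. The inequality conjuncts are handled by the interval postconditions 2--5 together with the MAX/MIN lemma: passing the endpoints $\ell, u$ (or $-\infty$/$+\infty$ when undefined, lines~\ref{alg3:diseqbothundef}--\ref{alg3:deflt}) and the flags $\ell_{\mathit{closed}}, u_{\mathit{closed}}$ selects exactly the interval $I \in \{(\ell,u),(\ell,u],[\ell,u),[\ell,u]\}$ for which $y \in I$ is equivalent to meeting the tightest lower and upper bounds with their correct strictness. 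Since $\ell$ (resp. $u$) is the maximum (resp. minimum) over all lower (resp. upper) bounds, $y \in I$ then implies every individual lower- and upper-bound literal. Thus the range postcondition gives all inequalities and postcondition~1 gives all disequalities, so $\pi(\vx, \mathit{sk}(\vx))$ holds.

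I expect the main obstacle to be the bookkeeping of the closedness flags and the faithful ``reduction to the tightest bound'': one must verify carefully that $y \in I$ really entails each individual strict or non-strict bound, paying attention to the boundary case where a strict and a non-strict bound share the same right-hand side (so the flag must resolve to strict), and to the case where the tightest bound is strict while looser non-strict bounds are also present. A secondary technical point is the LIA/LRA distinction: over the integers the normalizing rules from Sect.~\ref{sec:alg} (replacing $A < B$ by $A \le B - 1$ and $A \ge B$ by $A > B - 1$) must be applied first so that the same interval argument and the same postconditions apply; I would note that these rewrites are equivalences over $\mathbb{Z}$ and hence preserve $\pi$.
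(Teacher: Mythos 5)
Your proposal is correct and follows essentially the same route as the paper's proof: a case analysis on the return statements of Alg.~\ref{alg:extractsk}, discharging each branch via the URNG postconditions of Def.~\ref{def:uninterpRNG}. You in fact supply several details the paper leaves implicit --- the satisfiability of $\pi$ on the region where $\mathit{pre}$ holds (which is what actually justifies the two deterministic branches), the tightest-bound lemma for $\textsc{MAX}$/$\textsc{MIN}$ and the closedness flags, and the explicit appeal to postcondition~1 for the disequalities in $D$, which the paper's proof never mentions.
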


\begin{proof}
To prove this statement, it suffices to show that any computed Skolem term $\mathit{sk}_y(\vx)$ by Alg.~\ref{alg:extractsk} accompanied by the associated postconditions in Def.~\ref{def:uninterpRNG}, implies the input Skolem constraints in $\pi(\vx,y)$. Return lines~\ref{alg3:trivcase} and~\ref{alg3:lowequp} in \textsc{ExtractSK} are trivial cases, as they reduce to a simple assignment from equality constraints. Line~\ref{alg3:diseqbothundef} refers to the case where no bounds have been defined and the computed Skolem term is a URNG that utilizes the unconstrained variables $-\infty$ and $+\infty$ along with postcondition 5 to ensure the choice of an arbitrary value within the specified domain. Lines~\ref{alg3:defu} to~\ref{alg3:deflt} handle the case where inequalities exist that determine the lower and upper bounds $\ell(\vx)$ and $u(\vx)$. If the lower bound is undefined, line~\ref{alg3:defu} returns a URNG that is guaranteed to provide a random value between $-\infty$ and $u$ as per postconditions 4 and 5. We prove the soundness of terms provided by line~\ref{alg3:deflt} in a similar manner. If both bounds exist, then in line~\ref{alg3:finalret} the Skolem term returned is a URNG guaranteed to provide a value within the range specified by $\ell(\vx), u({\vx})$, as per postconditions 2-5. 
\end{proof}

\begin{theorem}[Completeness of Skolem Extraction]
The Skolem terms generated by Alg.~\ref{alg:extractsk} are sufficient to represent all possible witnesses of the conjunctive $\forall\exists$-formula in Eq.~\ref{eq:viable}.
\label{thm:complete}
\end{theorem}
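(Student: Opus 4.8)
The plan is to prove completeness by characterizing, for each conjunctive Skolem constraint $\pi(\vx, y)$, the exact set of witness values, and then showing that the \textsc{URNG} returned by Alg.~\ref{alg:extractsk} can realize every element of that set. Concretely, fix $\vx$ and let $S(\vx) \eqdef \{v \mid \pi(\vx, v) \text{ holds}\}$ be the set of all valid assignments to $y$. Since $\pi$ is a conjunction of literals drawn from $E \cup D \cup G \cup GE \cup L \cup LE$ (no disjunctions, as Skolem constraints are built from literals of a linear-arithmetic formula), $S(\vx)$ is the intersection of the half-spaces induced by the bound literals, minus the finitely many points ruled out by the disequalities in $D$. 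I would show both that every witness lies in $S(\vx)$ and that the algorithm's output ranges over all of $S(\vx)$; combined with soundness (Thm.~\ref{thm:sound}), this yields that the generated Skolem terms represent exactly the witnesses, hence all of them.

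First I would establish the exact shape of $S(\vx)$ by a case split mirroring the branches of Alg.~\ref{alg:extractsk}. If $E \neq \varnothing$ (line~\ref{alg3:trivcase}) or the computed bounds coincide, $\ell(\vx) = u(\vx)$ (line~\ref{alg3:lowequp}), then $S(\vx)$ is a singleton and the only witness is forced, so the deterministic return value is trivially complete. In the remaining cases I would argue that $\textsc{MAX}(G \cup GE)$ and $\textsc{MIN}(L \cup LE)$ compute the tightest lower and upper bounds symbolically, and that the flags $\ell_{\mathit{closed}}, u_{\mathit{closed}}$ correctly record whether the binding extremum is attained by a non-strict bound (so the endpoint is included) or a strict one (so it is excluded). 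This gives $S(\vx) = I(\vx) \setminus H(\vx)$, where $I(\vx)$ is the interval between $\ell$ and $u$ with endpoints open or closed per the flags, and $H(\vx)$ is the set of disequality right-hand sides; values of $H$ that fall outside $I$ are vacuously avoided and remove no witnesses.

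Next I would show the \textsc{URNG} covers $S(\vx)$. The crucial semantic point is that $f_{\textsc{rng}}$ is \emph{uninterpreted}: it is constrained only by postconditions 1--5 of Def.~\ref{def:uninterpRNG}, so the witnesses it can represent are precisely the values admitted by those postconditions over all conforming implementations. For the flag combination selected by the algorithm, postconditions 2--5 restrict the output exactly to $I(\vx)$ (with the matching open or closed endpoints), while postcondition 1 removes exactly the points of $H(\vx)$. Hence, ranging over all admissible implementations of $f_{\textsc{rng}}$, the set of values the returned term can take is exactly $I(\vx) \setminus H(\vx)$, which by the previous step equals $S(\vx)$: for any target witness $v \in S(\vx)$ there is a conforming implementation returning $v$. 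The unbounded branches (lines~\ref{alg3:diseqbothundef}--\ref{alg3:deflt}) are handled identically, reading $-\infty$ and $+\infty$ as the appropriate open endpoints.

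The main obstacle I anticipate is the reverse inclusion $S(\vx) \subseteq \{\text{values representable by the term}\}$, i.e. arguing rigorously that the postconditions exclude \emph{no} valid witness rather than merely guaranteeing some valid output; this is precisely where the uninterpreted reading of $f_{\textsc{rng}}$ and the correctness of the flag computation must be pinned down, and it is the essential gap between soundness (one direction) and completeness (the other). Finally I would lift the single-variable result to the vector $\vy$ by taking the conjunction of the per-variable terms produced by Alg.~\ref{alg:extractsk} and invoking independence of the random choices given $\vx$, and extend from the conjunctive setting to an arbitrary formula of the form in Eq.~\ref{eq:viable} by converting to DNF and attaching a random selector over the disjuncts, so that every witness of any satisfied disjunct remains representable.
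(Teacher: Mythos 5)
Your proposal is correct, but it reaches the conclusion by a genuinely different route than the paper. The paper reduces completeness to the claim that no strictly weaker postcondition $pc'$ (with $pc \Rightarrow pc'$) can still imply the Skolem constraints, and then argues by contradiction case-by-case over the return lines of Alg.~\ref{alg:extractsk}: for each case it exhibits a concrete weakening (e.g., replacing an open interval $(\ell,u)$ by $[\ell,u]$, or enlarging $u$ to some $u'>u$) and shows the weakened term can emit a value falsifying $\pi$. You instead characterize the witness set $S(\vx)$ directly as an interval --- with endpoints open or closed according to the computed flags --- minus the finitely many disequality points in $H$, and then show that the set of values realizable by \emph{some} conforming implementation of the uninterpreted $f_{\textsc{rng}}$ is exactly that set. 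The two arguments establish the same maximality fact, but yours makes explicit the link, left implicit in the paper, between ``the postcondition cannot be weakened'' and ``every witness is representable'', and it correctly isolates the reverse inclusion $S(\vx)\subseteq\{\text{representable values}\}$ as the substantive direction beyond soundness. You also spell out the lifting to vectors $\vy$ and to disjunctions via DNF, which the paper treats only informally outside the proof. What the paper's version buys is brevity and case-locality; what your version requires in exchange is an explicit correctness argument for the \textsc{MAX}/\textsc{MIN} bound and closedness-flag computation, which you rightly flag as the point needing care.
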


\begin{proof}
It suffices to prove that no weaker set of postconditions $pc'$ (i.e., $pc \Rightarrow pc'$) exists, such that:

\begin{equation}
\forall \vx. pc'(\mathit{sk}(\vx)) \Rightarrow \pi(\vx, \mathit{sk}(\vx))
\label{eq:weakprec}
\end{equation}

\noindent
We prove this by contradiction, assuming that  $pc'$ exists whenever Alg.~\ref{alg:extractsk} returns.

\noindent\textbf{Lines~\ref{alg3:trivcase} and ~\ref{alg3:lowequp}}. Alg.~\ref{alg:extractsk} returns the deterministic assignments $\textsc{ASN}(e)$ and $\ell$, for which no weaker postconditions exist.

\noindent\textbf{Line~\ref{alg3:diseqbothundef}}. In this case, no bounds have been defined, and postcondition 5 is used to denote a range with unconstrained bounds $-\infty$ and $+\infty$. Formally, we can simplify this postcondition to $pc = true$, for which no weaker postcondition exists. It is also noteworthy to state that weaker postconditions exist, but all of them have to violate at least one disequality in $D$.

\noindent\textbf{Line~\ref{alg3:defu}}. We have $\ell =$ undef, i.e., no constraints exist for the lower bound, and the Skolem term 

\begin{equation*}
\mathit{sk}(\vx) = f_{\textsc{rng}}(H, \true, u_{\mathit{closed}}, \mathit{min}, u)
\end{equation*}

is returned. Depending on whether the upper bound $u$ is closed or not, we have two cases. For brevity, we show the proof for the case where $u$ is closed, and the corresponding case for the open bound follows similar principles.
\begin{itemize}
    \item When $u$ is closed, the output constraints are simplified to $\pi(\vx, \mathit{sk}(\vx)) = \mathit{sk}(\vx) \leq u$ and the Skolem term
\begin{equation*}
\mathit{sk}(\vx) = f_{\textsc{rng}}(H, \true, \true, -\infty, u)
\end{equation*}

is returned, with postcondition 5 capturing the term's range. Assume that a weaker postcondition $pc'$ exists, such that Eq.~\ref{eq:weakprec} holds. Without loss of generality, we pick 

\begin{equation*}
pc' = f_{\textsc{rng}}(H,\true,\true, -\infty, u) \in [-\infty, u']
\end{equation*}

with $u' > u$. Therefore, we have that $pc \Rightarrow pc'$, but Eq.~\ref{eq:weakprec} does not hold for $pc'$, as the new term may provide the value $u'$ as an output,  falsifying $\pi(\vx, \mathit{sk}(\vx))$.
    
\end{itemize}

\noindent\textbf{Line~\ref{alg3:deflt}}. Similar to proof for line~\ref{alg3:defu}.

\noindent\textbf{Line~\ref{alg3:finalret}}. $\ell \neq u \neq \text{undef}$, and as such the output constraints can be simplified into $\pi(\vx, \mathit{sk}(\vx)) = \ell \sim \mathit{sk}(\vx) \sim u$, where $\sim{} \in \{<, \leq\}$. We have the following cases corresponding to the possible ranges:

\begin{enumerate}
  \item $(\ell,u)$. In this case we have $\mathit{sk}(\vx) = f_{\textsc{rng}}(H, \false, \false, \ell, u)$ and as postcondition $pc$ the second postcondition from Def~\ref{def:uninterpRNG}. Assume that a weaker postcondition $pc'$ exists, such that Eq.~\ref{eq:weakprec} holds. We can pick $pc' = f_{\textsc{rng}}(H, \false, \false, \ell, u) \in [\ell,u]$. In this case, $pc \Rightarrow pc'$ holds, but Eq.~\ref{eq:weakprec} does not hold, as we can pick any of the assignments $\mathit{sk}(\vx) = \ell$, $\mathit{sk}(\vx) = u$, which violate the constraints in $\pi$, reaching a contradiction.
  \item $[\ell, u)$. Similar to the previous proof, by picking, e.g., $pc' = f_{\textsc{rng}} \in [\ell,u]$.
  \item $(\ell, u]$. Similarly, we can pick $pc' = f_{\textsc{rng}} \in [\ell, u]$.
  \item $[\ell, u]$. Similarly, we can pick $pc' = f_{\textsc{rng}} \in [\ell', u]$, where $\ell' < \ell$.
\end{enumerate}
\end{proof}
\section{Implementation and Evaluation}
\label{sec:results}

We implemented our random synthesis algorithm as a complementary procedure to the original synthesis framework \jkind~\cite{gacek2018jk}, a Java implementation of a popular \textsc{Kind} model checker~\cite{katis2018validity,gacek2018jk,champion2016kind}. Following \textsc{Kind}, the input contracts are expressed using the Lustre dataflow language~\cite{lustrev6}.
\jkind provides support for synthesis both through the fixpoint algorithm in \jsynvg as well as its predecessor, \jsyn, a realizability checking algorithm based on the $k$-induction principle~\cite{gacek2015towards,katis2016towards,katis2015machine}. Our proposed Skolemization procedure in Alg.~\ref{alg:extractsk} is a new extraction method that is performed after the validity checking procedure in \aeval, thus making it inherently compatible with both \jsyn and \jsynvg~\footnote{The modified version of \jsynvg for random synthesis is available at \url{https://github.com/andrewkatis/jkind-1/tree/synthesis}. The modified version of \aeval with the new Skolemization procedure is available at \url{https://github.com/andrewkatis/fuzzersynthesis}.}. It is noteworthy that our approach does not add any performance overhead to the baseline implementation of \jsynvg, as shown in Table~\ref{tbl:comp}.

Since the synthesized Skolem functions are expressed in the SMT-LIB 2.0 language~\cite{barrett2010smt} by default, we translate them into executable C implementations. For the purposes of this paper, we mapped the application of URNGs to calls to random number generators of uniformly distributed values, unless otherwise noted.

The evaluation process of our work is twofold:

\begin{enumerate} 
 \item \emph{Empirical}. We performed case studies in applications where synthesis of random designs can be beneficial.\footnote{\href{https://figshare.com/articles/software/Synthesis_of_Infinite-State_Systems_with_Random_Behavior/12228026}{The benchmarks are publicly accessible online.} DOI : 10.6084/m9.figshare.12228026} For the first case study, we conducted an experiment in the context of model-based fuzz testing, where the goal was to synthesize \emph{reactive graybox fuzzers} capable of exposing vulnerabilities that can crash an application, through random test case generation. The second study revolves around \emph{controller synthesis} for avoidance games in robot motion planning.
  \item \emph{Synthesis time}. We investigated the effect that our Skolemization algorithm had on \jsynvg in terms of synthesis time. Furthermore, we compared our work to \textsc{DT-Synth}, a state-of-the-art synthesis tool for infinite-state problems~\cite{neider2019learning}.
\end{enumerate}

\section{Case Study 1:  Reactive Fuzzers}
\label{sec:fuz}

In our first case study, we explored the applicability of synthesized implementations with random behavior in fuzz testing. We focused on model-based approaches to examine a system-under-test (SUT), the input specification of which was used to derive test cases (see Utting et al. for a detailed survey~\cite{utting2012taxonomy}). In the past, model-based fuzz testing revolved around the use of structured descriptions of the system input in the form of grammars and a sophisticated implementation of a fuzzer that, given a grammar, would continuously feed random inputs to the SUT~\cite{aschermann2019nautilus, peachfuzzer, pham2016model, veggalam2016ifuzzer}. We show that synthesis offers a viable alternative technique in this context, where the generated implementations can serve as SUT-specific fuzzers, requiring for configuration nothing but the input specification for the SUT.

\subsection{Setup and Evaluation}

\begin{figure}[!t]
\centering
\includegraphics[width=3in]{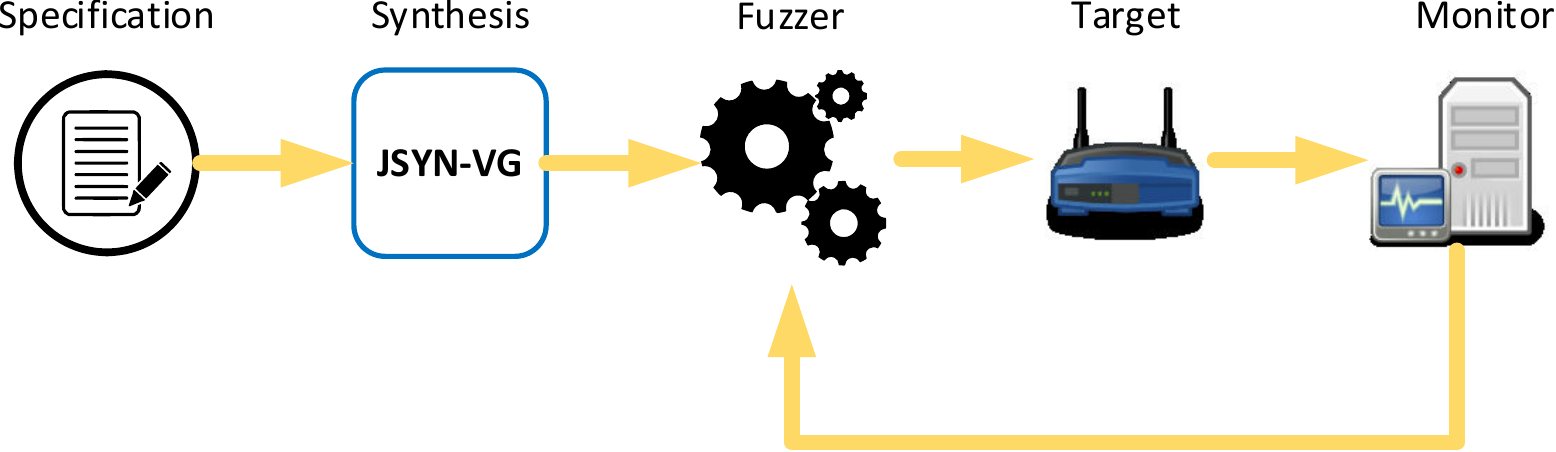}
\caption{Fuzzer synthesis and testing diagram}
\label{fig:fuzzsetup}
\end{figure}

The main intuition is that the SUT's input description can be viewed as a substantial fragment of the fuzzer's specification, which can then be used to synthesize a reactive random test case generator. Fig.~\ref{fig:fuzzsetup} depicts our exact setup, where the designer already has a specification for the SUT and uses \jsynvg with our Skolemization algorithm to automatically generate a corresponding fuzzer. The fuzzer is then attached to the SUT (\textsf{Target}), along with an accompanied monitoring service (\textsf{Monitor}) that tracks progress with respect to the SUT-related statistics (e.g., coverage). Following the definition of \emph{graybox fuzzing}, a feedback loop exists where monitored information can be subsequently fed to the fuzzer, in order to dictate the generation of future test cases. 

\begin{table*}[t!]
\centering
\caption{Fuzzer performance comparison and synthesis times.}
\label{tbl:fuzzcomp}
\resizebox{\textwidth}{!}{
\begin{tabular}{|c|c|c|c|c|c|c|c|c|c|c|c|}
\hline
\multirow{2}{*}{} & \multicolumn{4}{c|}{\textsc{AFL}} & \multicolumn{4}{c|}{\textsc{AFLFast}} & \multicolumn{3}{c|}{\multirow{2}{*}{Synthesized Fuzzer}} \\ \cline{2-9}
 & \multicolumn{2}{c|}{w/o corpus} & \multicolumn{2}{c|}{w/ corpus} & \multicolumn{2}{c|}{w/o corpus} & \multicolumn{2}{c|}{w/ corpus} & \multicolumn{3}{c|}{} \\ \hline
\textbf{System Under Test} & \textbf{Coverage (\%)} & \textbf{Crashed?} & \textbf{Coverage (\%)} & \textbf{Crashed?} & \textbf{Coverage (\%)} & \textbf{Crashed?} & \textbf{Coverage (\%)} & \textbf{Crashed?} & \textbf{Coverage (\%)} & \textbf{Crashed?} & \textbf{Synthesis times (s)} \\ \hline
basic\_messaging & \textbf{83.76\%} & \xmark & 82.48\% & \xmark & 82.48\% & \xmark & 82.48\% & \xmark & 81.21\% & \cmark & 16.067 \\ \hline
Dive\_Logger & 92.18\% & \xmark & 92.41\% & \xmark & 92.18\% & \xmark & 92.18\% & \xmark & \textbf{93.79\%} & \cmark & 99.852 \\ \hline
Divelogger2 & 80.50\% & \xmark & \textbf{87.04\%} & \xmark & 78.40\% & \xmark & 85.08\% & \xmark & 83.25\% & \xmark & 77.689 \\ \hline
Email\_System\_2 & 78.71\% & \xmark & \textbf{92.90\%} & \xmark & 91.61\% & \xmark & \textbf{92.90\%} & \xmark & 84.84\% & \cmark & 36.624 \\ \hline
Movie\_Rental\_Service & 38.32\% & \xmark & 38.72\% & \xmark & 38.32\% & \xmark & 38.72\% & \xmark & \textbf{49.50\%} & \cmark & 140.826 \\ \hline
Palindrome & \textbf{78.13\%} & \cmark & \textbf{78.13\%} & \cmark & \textbf{78.13\%} & \cmark & \textbf{78.13\%} & \cmark & 75.00\% & \cmark & 1.231 \\ \hline
PTaaS & 71.94\% & \cmark & \textbf{78.06\%} & \cmark & 46.76\% & \cmark & 46.76\% & \cmark & 74.46\% & \cmark & 77.041 \\ \hline
Quadtree\_Conways & 67.04\% & \xmark & \textbf{84.08\%} & \xmark & 67.04\% & \xmark & 67.04\% & \xmark & 64.79\% & \xmark & 88.743 \\ \hline
SCUBA\_Dive\_Logging & 80.97\% & \cmark & 80.67\% & \cmark & 76.41\% & \cmark & 76.41\% & \cmark & \textbf{83.56\%} & \cmark & 101.813 \\ \hline
User\_Manager & 58.43\% & \xmark & 67.45\% & \xmark & 29.02\% & \xmark & 29.02\% & \xmark & \textbf{79.80\%} & \xmark & 16.289 \\ \hline
\end{tabular}
}
\end{table*}

Using this setup, we proceeded with a thorough performance evaluation of our synthesized fuzzers, following guidelines that were recently proposed by Klees et al.~\cite{klees2018evaluating}:

\noindent\textbf{SUT Selection}. We considered ten applications from the DARPA Cyber Grand Challenge (CGC)~\footnote{The public CGC benchmark collection is available at \url{https://bit.ly/2HBqrJq}.}, a benchmark collection that has been extensively used in the past to assess the performance of fuzzers due to the high degree of interactivity between the SUT and the user~\cite{stephens2016driller, rawat2017vuzzer, peng2018t}. The original collection was aimed towards the evaluation of automated reasoning and testing tools, and each application is intentionally documented in a way that is insufficient to derive a precise specification from the documents themselves. To simulate the context under which synthesis would make most sense as a tool, we closely inspected and ran each application in order to identify the types and sequences of inputs each application takes. The manual process of discovering and writing input specifications per application was non-trivial, as each application differs considerably from the rest, and was the main factor to the study being limited to a subset of ten applications.
  
\noindent\textbf{Fuzzer specification}. After inspecting each application to identify the kinds of inputs that it takes, we wrote a corresponding Assume-Guarantee contract for a fuzzer. Each fuzzer specification consists of properties that capture the valid ranges of values for each one of the SUT inputs. Moreover, the specification is stateful, making each fuzzer reactive to changes (or lack thereof) in coverage results from previously generated tests. We specified the behavior of the fuzzer in such a way that, for the majority of its runtime, valid inputs are fed into the SUT. As long as no progress is made in terms of coverage, the fuzzer attempts generating invalid tests with probability $p = 0.2$. 
  
\noindent\textbf{Formalization}. All of the aforementioned elements that comprise the fuzzer specification can be expressed using a set of \emph{safety} properties over the SUT inputs, 
  where each set precisely captures the conditions under which a (in)valid value is generated for the corresponding input. An example pair of such properties is the following:

\begin{itemize}
\item $prop_1 \eqdef p' \ge 0 \land p' \le 1$
\item $prop_2 \eqdef (\lnot cvg \land (p \le 0.1 \lor p \ge 0.9)) \Rightarrow in'_{\text{sys}} \notin S_{\text{valid}}$
\end{itemize}

Variables $p$ and $in_{\text{sys}}$ are fuzzer outputs, with $in_{\text{sys}}$ also serving as a corresponding input for the SUT. The value of $p \in [0,1]$ is picked randomly for each test, and it determines whether the next (primed) system input $in'_{\text{sys}}$ will be assigned to a valid value (i.e., a value in $S_{\text{valid}}$) or not. Variable $cvg$ is an input to the fuzzer and can be viewed as a flag which, when set, informs the fuzzer that the previous test resulted in progress in system coverage (e.g., line coverage improved). If such progress was not observed, then we allow the fuzzer to randomly consider invalid values in subsequent tests. More specifically, when $p \le 0.1 \lor p \ge 0.9$, the fuzzer will generate an invalid value, i.e., a value that does not satisfy the constraints that define $S_{\text{valid}}$. Following the notation that we described in previous sections, the synthesis problem for the properties above is to ensure that $\forall p, cvg, in_{\text{sys}} \exists p', in'_{\text{sys}}. (prop_1 \land prop_2)$ is valid. 

\noindent\textbf{Synthesis and Evaluation}. Using the fuzzer contracts, we synthesized a fuzzer for each application and ran it against the SUT using the setup in Fig.~\ref{fig:fuzzsetup}. We set the timeout for each fuzzing campaign to nine hours, and monitored the SUT line coverage (\emph{gcov}) as well as crashes. To compare performance, we also ran fuzzing campaigns using \textsc{AFL}~\cite{zalewskiAFL} and \textsc{AFLFast}~\cite{bohme2017coverage}, using their default configurations. We selected these tools primarily due to \textsc{AFL} being one of the most prominent tools in the area, while \textsc{AFLFast} is a recent extension to \textsc{AFL} that has been shown to perform better with respect to vulnerability detection.\footnote{Both \textsc{AFL} and \textsc{AFLFast} do not support line coverage reporting natively. To monitor coverage, we used \emph{afl-cov}~\cite{aflcov}, a wrapper tool that enables the use of \emph{gcov} with \textsc{AFL} and its variants.} Both tools were run both with and without an initial corpus in order to provide a more complete picture of their performance, whether the user provides additional information or not. To remain fair with respect to the evaluation, the corpora were created using tests that exercise application locations that are as deep as possible.


Table~\ref{tbl:fuzzcomp} shows the results of our experiments. Most of the applications contain unreachable code related to debugging methods, and as such 100\% coverage is not attainable using \emph{gcov} without further modifications to the source code. While we were able to achieve $\geq 75\%$ line coverage for the majority of the benchmarks, the application ``Movie\_Rental\_Service'' was the worst performing with only 49.5\%. Despite that, the synthesized fuzzer outperformed both \textsc{AFL} and \textsc{AFLFast} on either configuration with a significant margin. In fact, our synthesized fuzzers outperformed both \textsc{AFL} and \textsc{AFLFast} on four applications and remained within 4\% of the best performing tool for five others, with ``Quadtree\_Conways'' being the only exception. More interestingly, seven of the synthesized fuzzers were able to crash the corresponding application at least once, whereas \textsc{AFL}/\textsc{AFLFast} were only able to crash three.

Considering the performance results along with the low synthesis time per fuzzer, we believe that synthesis of model-based fuzzers should be considered a viable tactic towards testing systems where a specification already exists. Arguably, a synthesized fuzzer is as easy to use as a general-purpose tool like \textsc{AFL}. Furthermore, the user does not have to provide additional information through a corpus, a procedure in testing that often times can be time consuming and cumbersome, as both valid and invalid input sequences have to be considered for a successful campaign. 

\section{Case Study 2: Robot Motion Planning}
\label{sec:robot}

In our second study, we synthesized implementations for robots participating in two-player safety games against an adversary. The study is furthermore split into two parts.

\subsection{Simulating avoidance games}
\label{subsec:avoid}
We experimented on simulating an avoidance game in a bounded arena, where the synthesized solution was used against two different adversarial scenarios. Both the properties of the robot and the adversary were specified using their position in terms of $(x,y)$ coordinates. Formally, we described the game using the following properties:

\begin{itemize}[leftmargin=*]
  \item Initial state : The robot starts in $(x_{\text{init}}, y_{\text{init}})$ (similarly for the adversary).
  \item Valid transitions : $x'_{\text{robot}} \in [x_{\text{robot}} - \delta, x_{\text{robot}} + \delta]$, where $\delta$ is user-defined and captures the maximum distance between subsequent moves (similarly for $y$-coordinate and the adversarial transitions).
  \item In-bounds property : $x_{\text{robot}} \ge x_{\text{min}} \land x_{\text{robot} }\le x_{\text{max}}$ (similarly for the $y$-coordinate).
  \item Avoidance property : $x_{\text{robot}} \neq x_{\text{adversary}} \lor y_{\text{robot}} \neq y_{\text{adversary}}$.
\end{itemize}

The first scenario in our presentation involves the adversary patrolling on a specific route, while in the second the adversary is always moving towards the robot. Trajectory videos for both scenarios are available online\footnote{Pictures and videos of the simulated games presented in this section were anonymized and made available at \url{https://figshare.com/s/ce2dfd885b3caf20f46d}.}.

\subsubsection{Real Coordinates.}
Fig.~\ref{fig:traj} shows three possible trajectories that were generated after running the synthesized solution for 1000 turns against the patrolling adversary. Both robots move in the arena using rational coordinates in a 5x5 box. The initial location for the robot is the point $(0.5,0.5)$ and the adversary begins its route from $(0.8,0.8)$. While the adversary has a predetermined route, the robot is allowed to move towards any possible direction (vertically, horizontally and diagonally). Moreover, the robot can move at varying distances up to 0.1 units away from its current position, in both axes (i.e., $|x_{\text{robot}} - x'_{\text{robot}}| \leq 0.1$, and similarly for the y axis). Fig.~\ref{fig_first_case} indicates how the synthesized solution can respond in a random pattern, covering different parts of the bounded arena while preserving safety. Fig.~\ref{fig_second_case} and~\ref{fig_third_case} demonstrate the resulting trajectories when the user introduces bias in the values returned by the random number generators, using the same generated witness from \aeval. As a result, the robot was limited to moves that would retain its position within the central area of the arena (Fig.~\ref{fig_second_case}) and close to the bottom left corner of the patrolling adversary's route (Fig.~\ref{fig_third_case}).


\begin{figure*}[!t]
\centering
\subfloat[]{\includegraphics[width=2.3in]{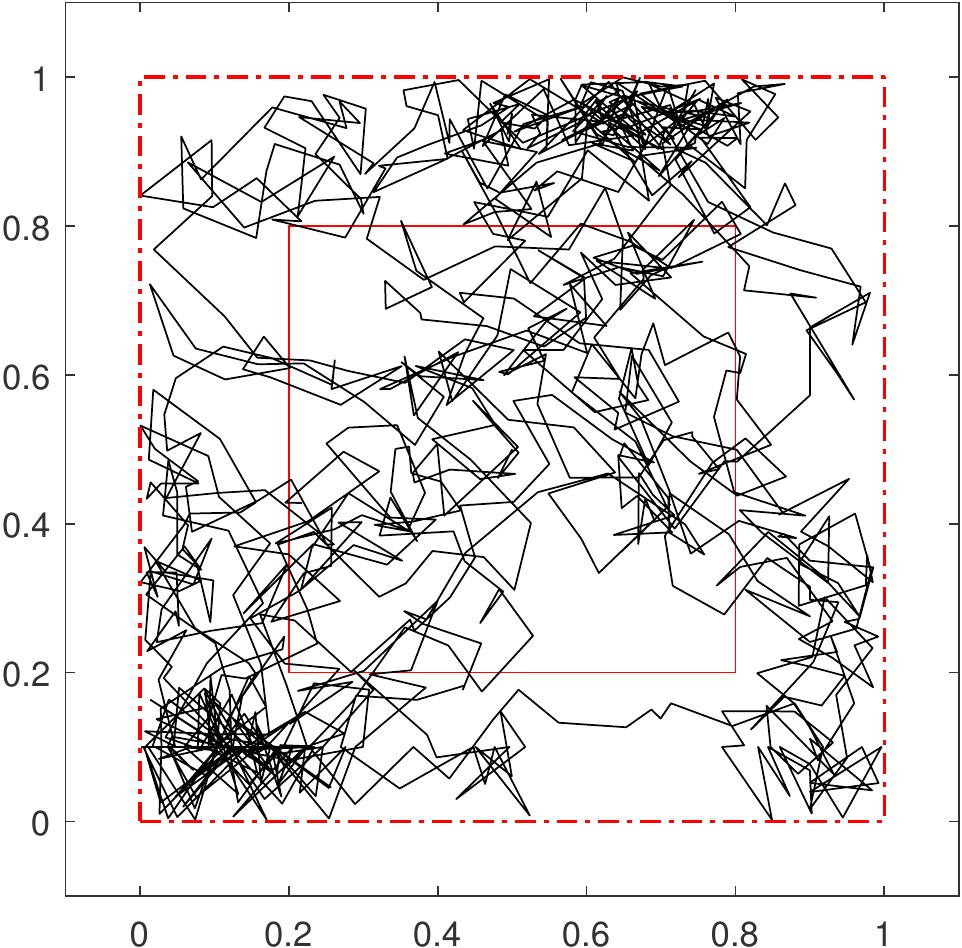}
\label{fig_first_case}}
\hfil
\subfloat[]{\includegraphics[width=2.3in]{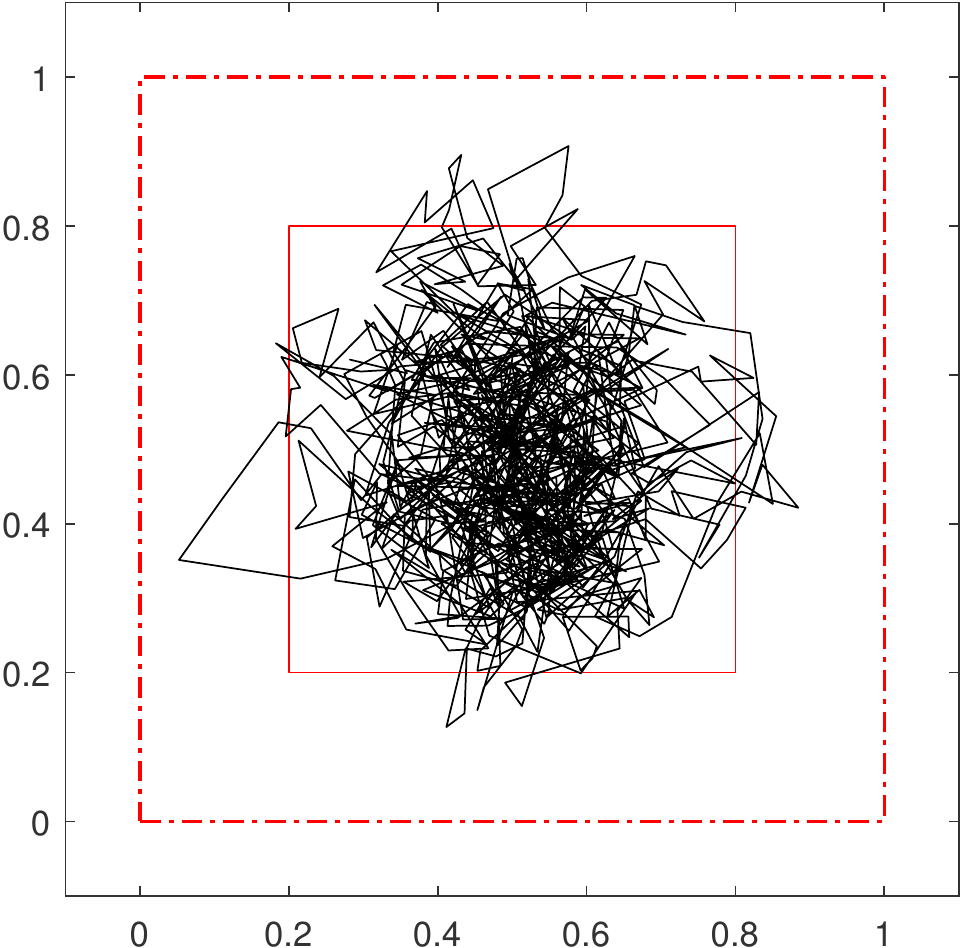}
\label{fig_second_case}}
\hfil
\subfloat[]{\includegraphics[width=2.3in]{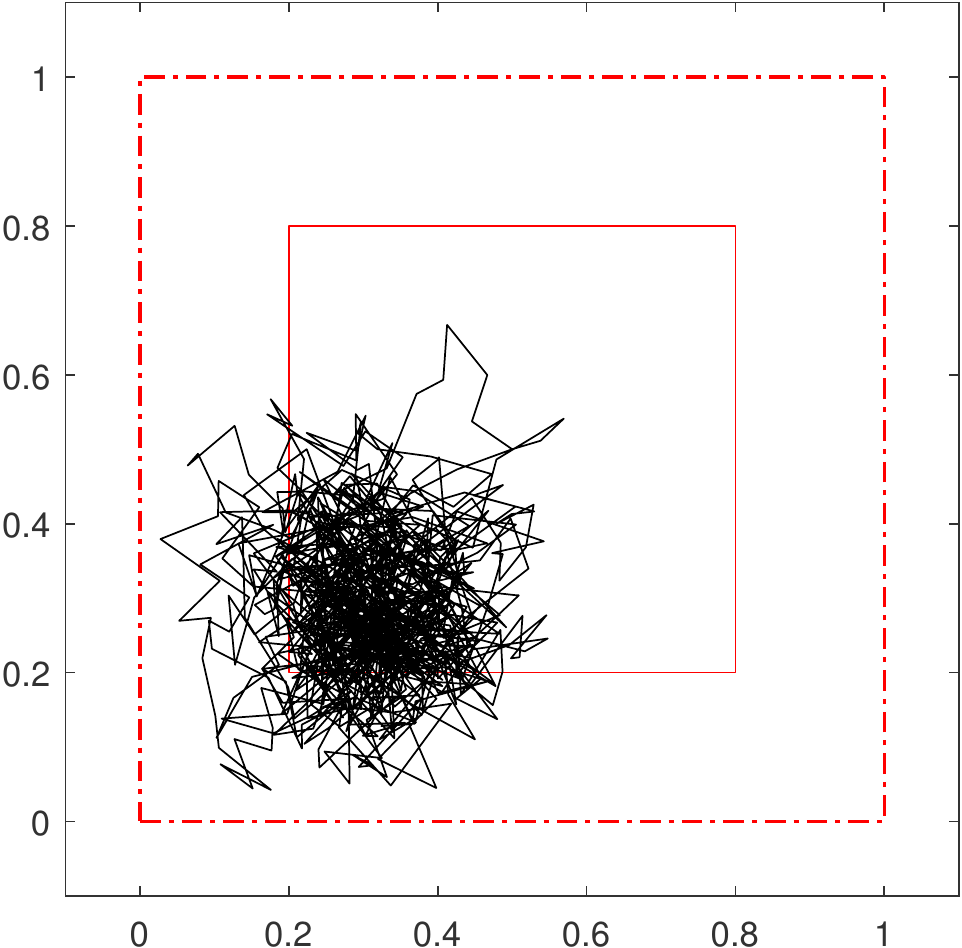}
\label{fig_third_case}}
\caption{Random trajectories of a robot (irregular solid line) while avoiding a patrolling adversary (inner square).}
\label{fig:traj}
\end{figure*}

\subsubsection{Integer Coordinates.}
For this experiment, we aimed to demonstrate the advantages that randomness can provide with respect to how well a robot covers a bounded arena, inspired by work in coverage path planning problems. Fig.~\ref{fig:nondetvsdet} shows how two trajectories evolved over several turns (100, 250 and 1000 turns) for a similar motion planning problem using integer coordinates. To demonstrate which parts of the arena the robot explored we outline its trajectory with a bold black line, while the red line represents the trajectory of the adversary. In this game, the adversary is aggressively chasing after the robot in a random fashion. The robot's objective remains the same, i.e., move within the bounded arena while avoiding the adversary. The robot's initial location is the point $(0,0)$, while the adversary begins at $(6,6)$. 

In fact, Fig.~\ref{fig_first_nondet_case},~\ref{fig_second_nondet_case}, and~\ref{fig_third_nondet_case} show moves performed by a random controller, while Fig.~\ref{fig_first_det_case},~\ref{fig_second_det_case},and ~\ref{fig_third_det_case} depict the behavior of the deterministic solution provided by the standard synthesis algorithm in \jsynvg. It is apparent that 
the former visits 100\% states in less than 250 turns, whereas the latter visits only 30\% states in 1000 turns. 
This comparison showcases the advantages that a random solution can provide in terms of overall coverage as well as the diversity of behaviors that can be observed and exercised when an implementation can be generated that always considers the entire set of safe choices, instead of an instantiated strategy.


\begin{figure}[!ht]
\centering
\subfloat[100 turns]{\includegraphics[width=1.5in]{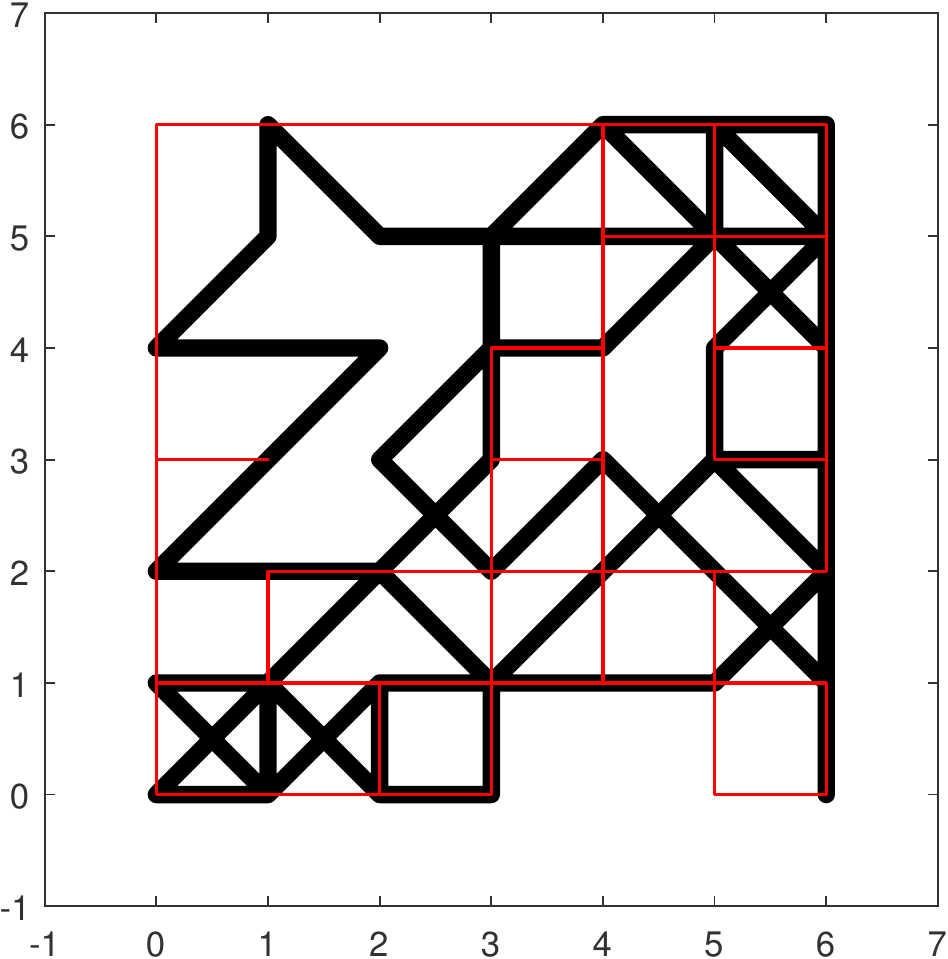}
\label{fig_first_nondet_case}}
 \hfil
\subfloat[100 turns]{\includegraphics[width=1.5in]{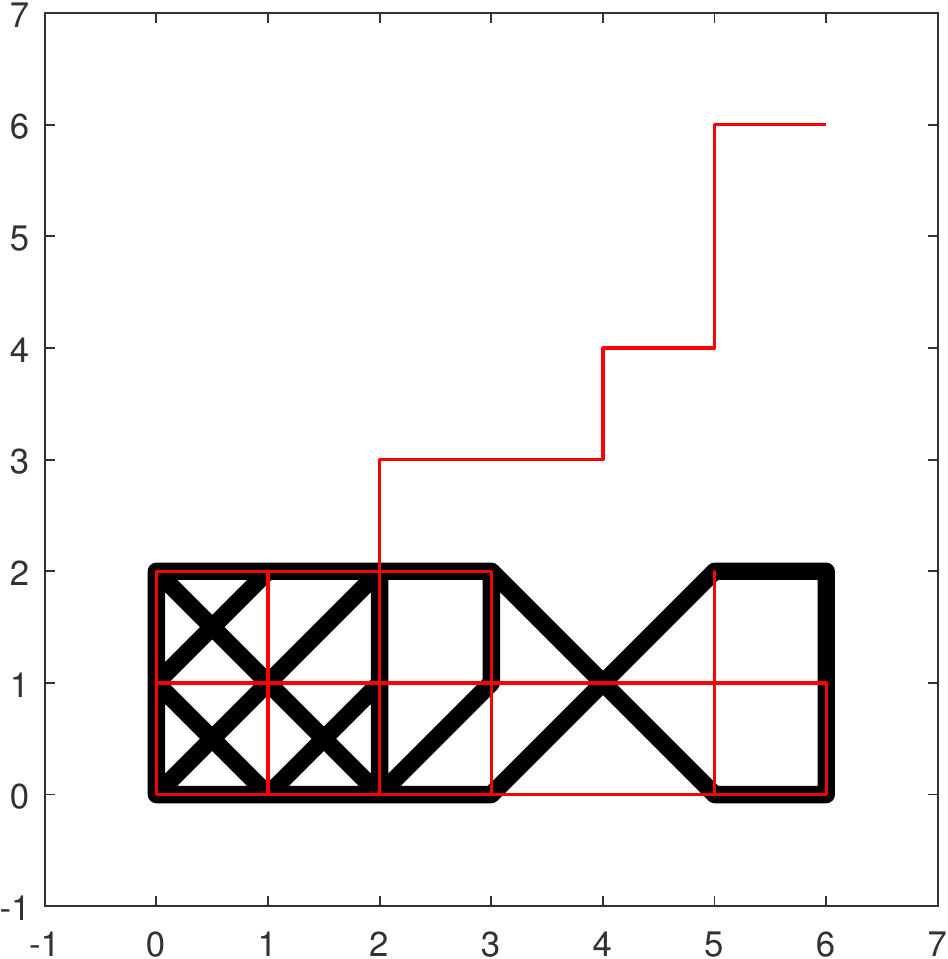}
\label{fig_first_det_case}}
\\
\subfloat[250 turns]{\includegraphics[width=1.5in]{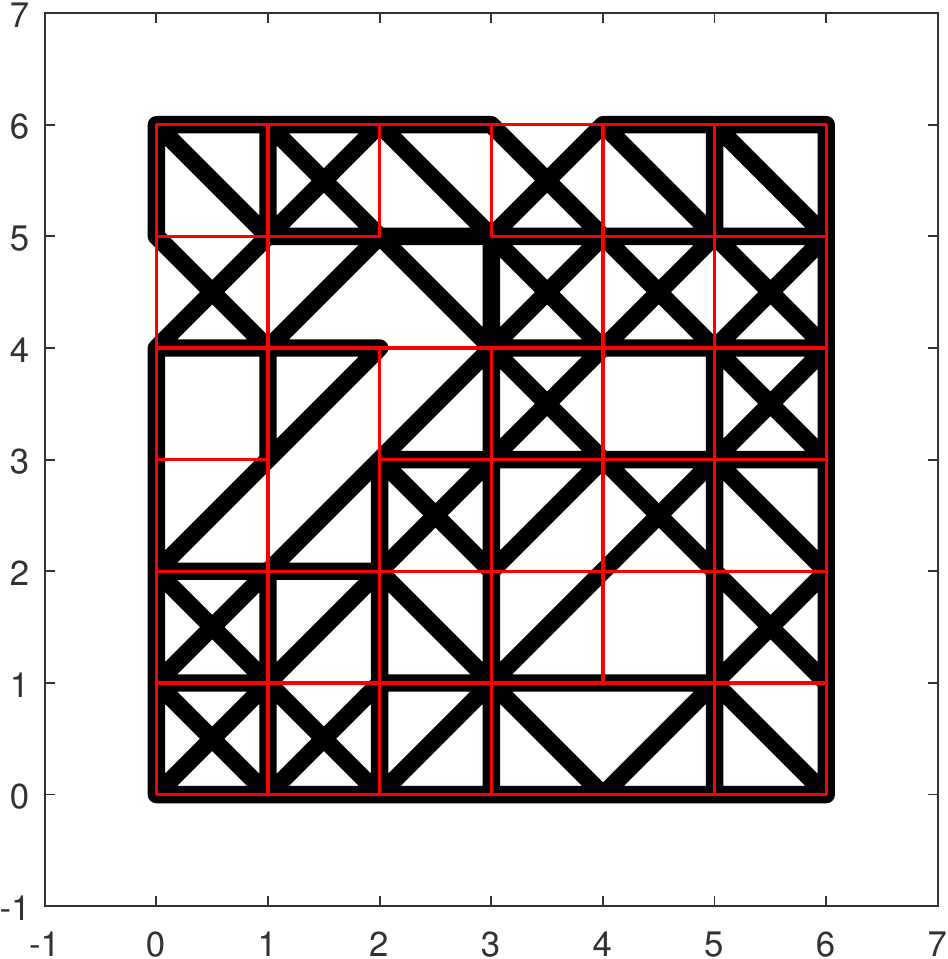}
\label{fig_second_nondet_case}}
 \hfil
\subfloat[250 turns]{\includegraphics[width=1.5in]{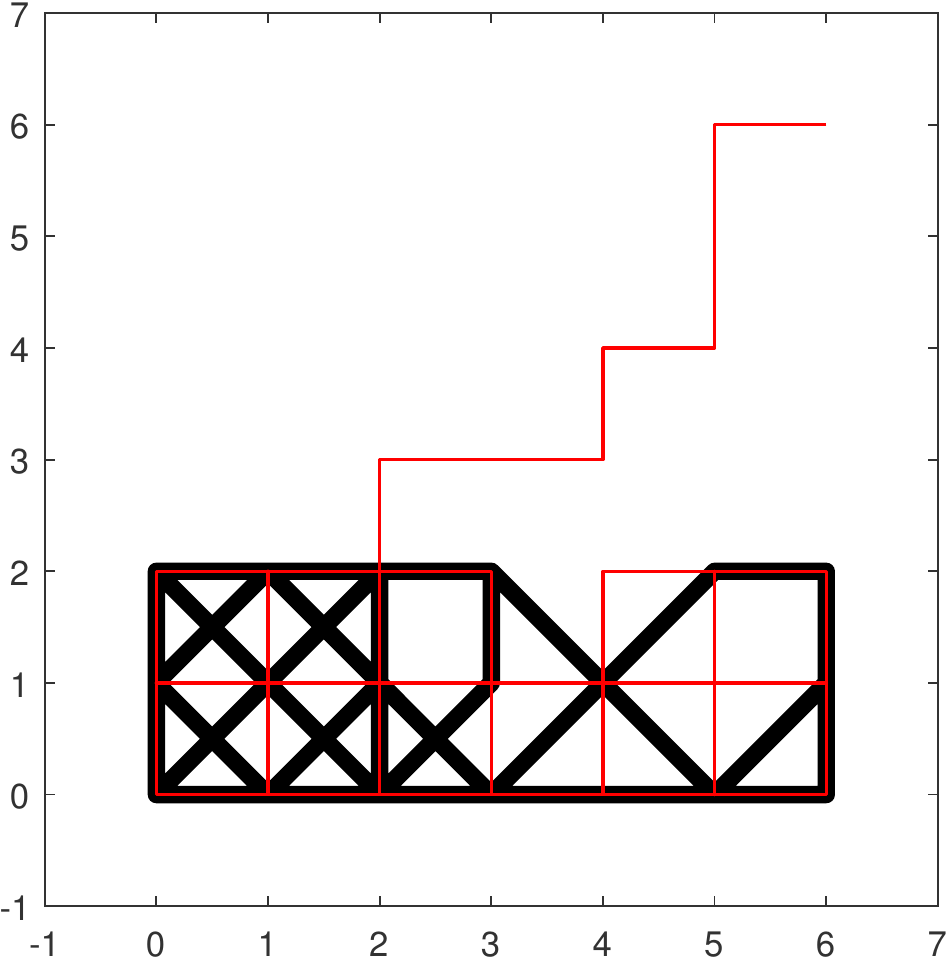}
\label{fig_second_det_case}}
\\
\subfloat[1000 turns]{\includegraphics[width=1.5in]{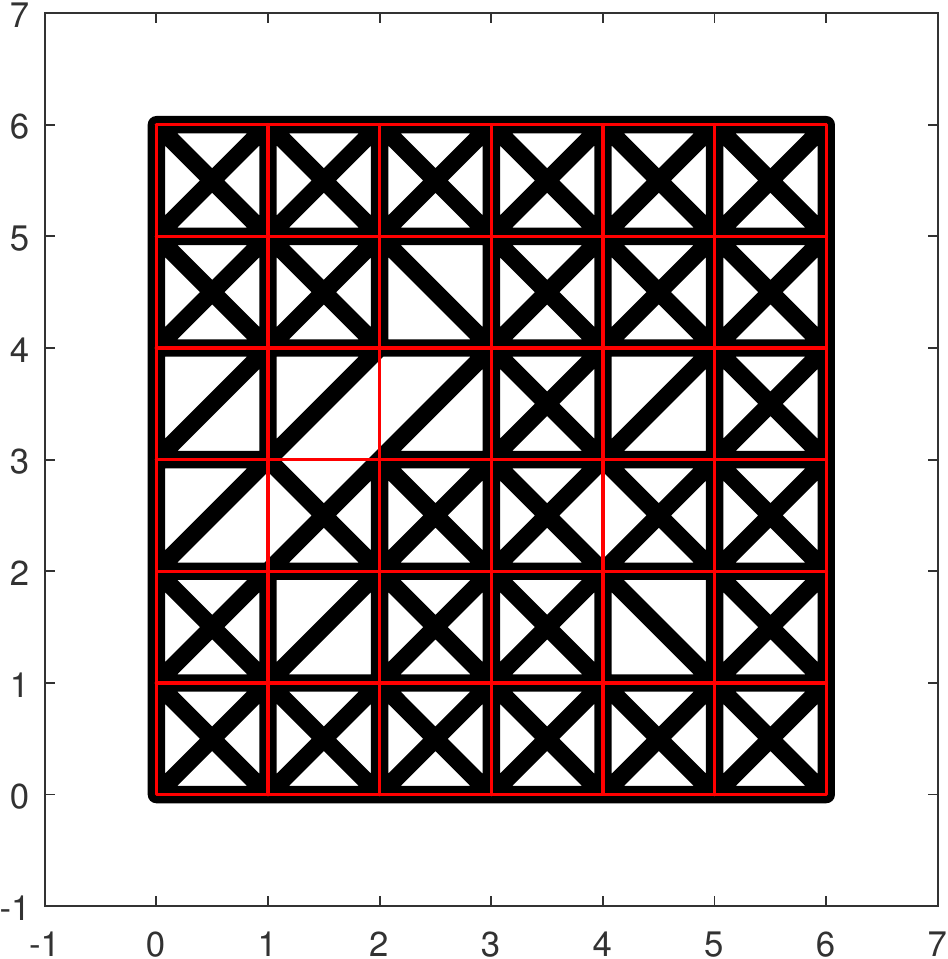}
\label{fig_third_nondet_case}}
 \hfil
\subfloat[1000 turns]{\includegraphics[width=1.5in]{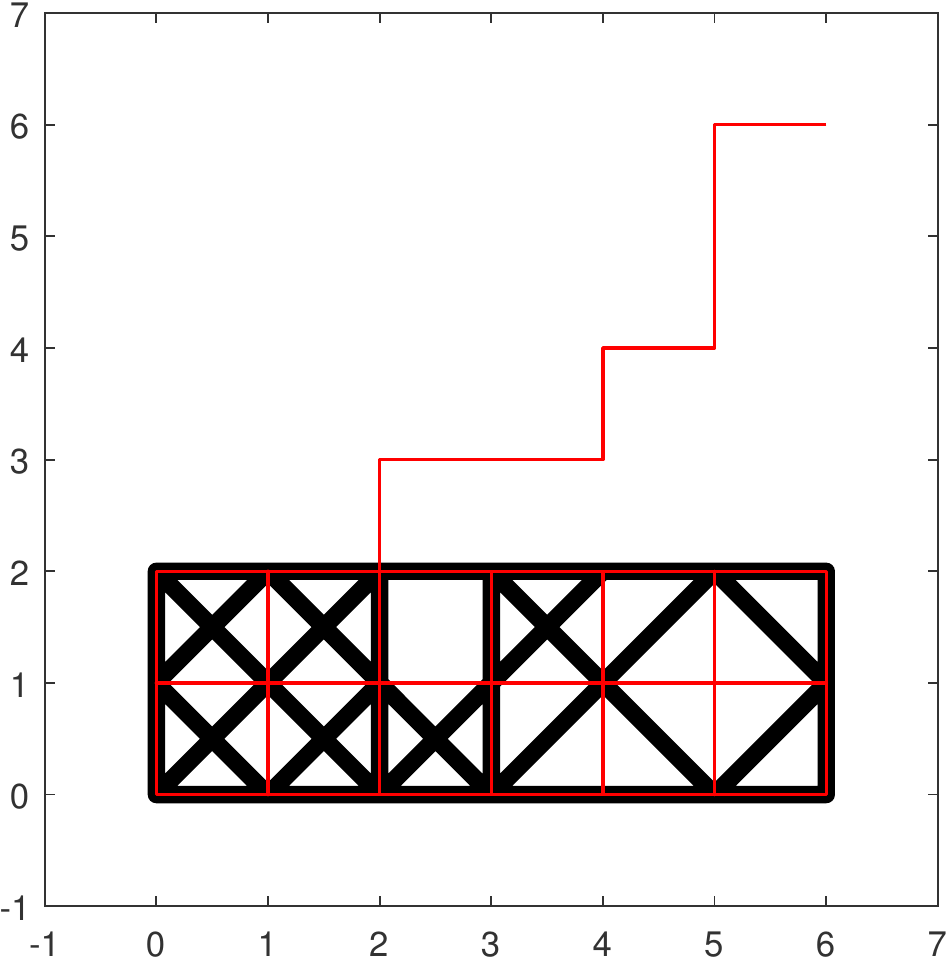}
\label{fig_third_det_case}}
\caption{Trajectories and area coverage over time of random (left column) versus deterministic (right column) controller.}
\label{fig:nondetvsdet}
\end{figure}

\section{Evaluation -- Synthesis time}

Our case study in robot motion planning was inspired by results in this context from the most recent and related work on \textsc{DT-Synth}~\cite{neider2019learning}. This reactive synthesis framework incorporates learning techniques to generate winning sets for infinite-state safety games in the form of decision trees. \textsc{DT-Synth} has been shown to outperform previous proposed synthesis tools, both in infinite-state (\textsc{ConSynth}~\cite{beyene2014constraint} and finite-state problems (\textsc{RPNI-Synth} and \textsc{SAT-Synth}~\cite{neider2016automaton}). While the authors do not explicitly talk about randomness, the winning sets provided by \textsc{DT-Synth} are sufficient to generate implementations with diverse behavior. Despite this fact, the generated winning sets are subsets of the greatest fixpoint of safe states, which would lead to implementations that only exercise a fragment of the reachable state space. An example is the winning set that we mentioned for the motivating example in Section~\ref{sec:ex}.

Note that \textsc{DT-Synth} works only for finite-branching game graphs, and the user must additionally specify a minimum value for the number of successors for each vertex in the graph. An incorrect value for this threshold can lead to unsound witnesses. With our \jsynvg, such additional knowledge is not required from the user since it is only reliant on the original specification and is guaranteed to provide sound results, thanks to Theorem~\ref{thm:sound}.

Table~\ref{tbl:comp} presents the comparison of \jsynvg and \textsc{DT-Synth}. As an addendum we included the synthesis times for the problems using the existing deterministic synthesis algorithm in \jsynvg. As we mentioned in Sect.~\ref{sec:results}, the performance is identical when compared to synthesizing random witnesses.

For the purposes of this comparison, we used the infinite-state benchmarks presented in the original paper on \textsc{DT-Synth}~\cite{neider2019learning}, as well as the simulated avoidance games from Sect.~\ref{subsec:avoid}, namely \emph{bounded\_evasion} and \emph{bounded\_evasion\_ints}. The two tools have similar performance for half the benchmarks, with significant differences for the rest. For \emph{diagonal}, the main distinction is that \textsc{DT-Synth} requires the definition of two additional expressions to guide the learning procedure, whereas \jsynvg finds a solution without additional templates. On the other hand, \textsc{DT-Synth}'s ability to synthesize memoryless strategies allows for  faster synthesis for \emph{solitarybox}, where the robot is simply moving freely within an infinite arena while staying within a horizontal stripe of width equal to three. \jsynvg is targeted to synthesis of stateful systems, and as such, a more elaborate strategy is generated. 

In the case of~\emph{repair-critical}, \textsc{DT-Synth} appears to be more efficient in terms of handling disjunctive expressions in the specification, while for~\emph{synth-synchronization} \textsc{DT-Synth} seems to require more elaborate hypotheses in order to come up with a witness. The latter is further demonstrated in the results for the~\emph{cinderella} and~\emph{bounded\_evasion\_ints} games, where \textsc{DT-Synth} fails to synthesize a witness within the timeout of 15 minutes. In contrast, \jsynvg  computes a greatest fixpoint of safe states and synthesizes a solution in a few seconds. Finally, for the game~\emph{bounded\_evasion}, \textsc{DT-Synth} does not currently support the theory of linear real arithmetic. 

\begin{table}[t!]
\centering
\caption{Synthesis time of \textsc{DT-Synth} and \jsynvg (seconds).}
\label{tbl:comp}
\begin{tabular}{|l|c|c|c|}
\hline
\textbf{\textbf{Benchmark}} & \textbf{\jsynvg} & \textbf{\begin{tabular}[c]{@{}c@{}}\jsynvg \\ (random)\end{tabular}} & \textbf{DT-Synth} \\ \hline
box & 0.603 & 0.606 & \textbf{0.258} \\ \hline
diagonal & 1.109 & \textbf{1.011} & 6.027 \\ \hline
evasion & 0.705 & 0.605 & 0.660 \\ \hline
follow & 3.34 & 3.029 & \textbf{1.034} \\ \hline
limitedbox & 3.229 & 3.332 & 3.350 \\ \hline
solitarybox & 1.902 & 1.816 & \textbf{0.284} \\ \hline
square & 5.823 & \textbf{5.605} & 6.44 \\ \hline
program-repair & 3.122 & 3.638 & \textbf{2.452} \\ \hline
repair-critical & 83.891 & 88.073 & \textbf{30.593} \\ \hline
synth-synchronization & 23.013 & \textbf{23.2} & 89.804 \\ \hline
cinderella ($c = 2$) & 20.061 & \textbf{20.167} & $>900$ \\ \hline
cinderella ($c = 3$) & 12.02 & \textbf{11.294} & $>900$ \\ \hline
bounded\_evasion & 49.528 & \textbf{49.662} & unsupported \\ \hline
bounded\_evasion\_ints & 31.614 & \textbf{32.611} & $>900$ \\ \hline
\end{tabular}
\end{table}
\section{Related Work}
\label{sec:related}

The idea of synthesizing reactive designs with random behavior is relevant to synthesis for permissive games. This area has been explored in the past for finite-state problems~\cite{bouyer2009measuring, konighofer2017shield}. More recently, Fremont and Seshia described a formal extension to the theory of Control Improvisation to support reactive synthesis~\cite{fremont2018reactivecontrolimprov}. Their probabilistic approach is limited to finite-state problems and practically useful only for the subset of safety games. The end result is a maximally-randomized finite word generator, called an improviser, where each word satisfies the predetermined probability threshold constraints. In comparison, our approach synthesizes designs for infinite-state problems that simulate randomness. Furthermore, we do not provide guarantees regarding the randomness of the responses from the synthesized witness. Instead, we focus on synthesizing witnesses that consider regions of values as candidates to variable assignments, a problem reducible to SMT. In our case, the end product is an implementation that can be further refined by the engineer with a custom probability distribution to retrieve random values. This provides invaluable freedom as the user can then choose whether to express bias through the requirements or through the random number generators themselves.

The original work on \jsynvg targeted the area of infinite-state problems. In this context, Beyenne et al. first proposed a template-based approach called \textsc{ConSynth}, where the specification is accompanied by a template regarding the shape of the solution to guide the synthesizer towards a solution~\cite{DBLP:conf/popl/BeyeneCPR14}. In contrast, \jsynvg is a completely automated approach exempting the user from necessity to further reason about the shape of the computed solution and allowing to focus on expressing the problem in the form of input-output contracts. Permissive solutions for infinite-state games have primarily been proposed in the context supervisor synthesis~\cite{claessen2017supervisory,shoaei2014supervisory}, where a controller is synthesized considering a formal representation of the behavior (inputs) provided by the participating plant. Compared to this work, our proposed solution explores the applicability of synthesized controllers with random behavior, while the overall synthesis task is inherently harder due to not requiring an exact implementation of the controller's environment.

Neider and Markgraf recently proposed \textsc{DT-Synth}~\cite{neider2019learning}, a learning-based approach to synthesizing winning sets for infinite state games in the form of decision trees, as an extension to previous work by Neider and Topcu for finite-state problems~\cite{neider2016automaton}. \textsc{DT-Synth} requires additional knowledge regarding the number of successor states, where lack thereof can lead to unsound results. Similarly to \textsc{ConSynth}, for more complex problems, the game specification is supported by additional syntactic expressions that help the learner converge faster to a solution. In contrast, our Skolem extraction algorithm is guaranteed to provide sound witnesses and does not depend on additional user-provided input.
\section{Conclusion}
\label{sec:conclusion}

We have presented a novel Skolemization procedure for the \aeval solver enabling the synthesis of infinite-state reactive implementations with random behavior. The proposed solution is an extension to the synthesis algorithm \jsynvg that computes a greatest fixpoint of safe states. The product is a witness where values inside safe regions are being considered as equally safe candidate assignments. Our solution provides the engineer with flexibility when it comes to introducing additional bias through the specification or the implemented random number generators.

To the best of our knowledge, this is the first work that is capable of synthesizing random infinite-state systems. We showed how the extended synthesis framework can be effectively used to synthesize promising solutions in the context of robot motion planning, as well as a novel application in fuzz testing. In the future, we wish to continue exploring the area of reactive fuzzer synthesis, particularly in the context of identifying a formal specification standard. To expand its applicability in robot motion planning, we wish to explore ways to support liveness specifications, as well as soft requirements. The outstanding result of this work is a Skolem extraction procedure general enough to be applicable to other, unexplored, aspects of the synthesis problem.

\begin{acks}
The authors would like to thank Kristopher Cory for his invaluable advice towards setting up the fuzz testing experiments. The authors would also like to thank the anonymous reviewers for their comments. This work was funded in part by the \grantsponsor{}{United States Department of the Navy, Office of Naval Research}{} under contract \grantnum{}{N68335-17-C-0238}.
\end{acks}

\bibliographystyle{ACM-Reference-Format}
\bibliography{main}
\end{document}